\renewcommand{\paragraph}{\roman{paragraph}}
\renewcommand\title[1]{\gdef\@title{\reset@font\Large\bfseries #1}}
\renewcommand\section{\@startsection {section}{1}{\z@}%
                                   {-3.5ex \@plus -1ex \@minus -.2ex}%
                                   {2.3ex \@plus.2ex}%
                                   {\normalfont\large\bfseries}}
\renewcommand\subsection{\@startsection{subsection}{2}{\z@}%
                                     {-3ex\@plus -1ex \@minus -.2ex}%
                                     {1.5ex \@plus .2ex}%
                                     {\normalfont\normalsize\bfseries}}
\renewcommand\subsubsection{\@startsection{subsubsection}{3}{\z@}%
                                     {-2.5ex\@plus -1ex \@minus -.2ex}%
                                     {1.5ex \@plus .2ex}%
                                     {\normalfont\normalsize\bfseries}}
\def\@runningauthor{}\newcommand{\runningauthor}[1]{\def\runningauthor{#1}}
\def\@runningtitle{}\newcommand{\runningtitle}[1]{\def\runningtitle{#1}}
\renewcommand{\ps@plain}{%
\renewcommand{\@evenhead}{\footnotesize\scshape \hfill\runningauthor\hfill}
\renewcommand{\@oddhead}{\footnotesize\scshape \hfill\runningtitle\hfill}}
\newcommand{\F}{\mathbb{F}}
\newcommand {\ccc}{{\mathbf{c}}}
\newcommand {\C}{{\mathcal{C}}}
\newcommand {\PP}{{\mathcal{P}}}
\newcommand{\GRM}{{\mathrm{GRM}}}
\newcommand{\bc}{{{\bf c}}}
\newcommand{\bx}{{{\bf x}}}
\newcommand{\by}{{{\bf y}}}
\newcommand{\uuu}{{{\mathbf{u}}}}
\newcommand{\vvv}{{{\mathbf{v}}}}
\newcommand{\wt}{{{\mathbf{wt}}}}
\g@addto@macro\bfseries{\boldmath}
\theoremstyle{plain}
\newtheorem{theorem}{Theorem}[section]
\newtheorem{lemma}[theorem]{Lemma}
\newtheorem{corollary}[theorem]{Corollary}
\theoremstyle{definition}
\newtheorem{definition}[theorem]{Definition}
\newtheorem{example}[theorem]{Example}
\theoremstyle{remark}
\newtheorem{remark}[theorem]{Remark}
\runningauthor{}
\date{}
\begin{document}
\begin{sloppypar}

\title{Symplectic self-orthogonal and LCD codes from the Plotkin sum construction}
\author{Shixin Zhu, Yang Li\thanks{Corresponding author}, Shitao Li
\thanks{This research is supported by the National Natural Science Foundation of China (U21A20428, 12171134 and 12001002).}
\thanks{Shixin Zhu and Yang Li are with the School of Mathematics, Hefei University of Technology, Hefei, 230601, China 
(email: zhushixinmath@hfut.edu.cn, yanglimath@163.com). 
Shitao Li is with the School of Mathematical Sciences, Anhui University, Hefei, 230601, China, (email: lishitao0216@163.com).}}

    \maketitle

  \begin{abstract}
      In this work, we propose two criteria for linear codes obtained from the Plotkin sum construction being symplectic self-orthogonal (SO) and 
      linear complementary dual (LCD). 
      As specific constructions, several classes of symplectic SO codes with good parameters including symplectic maximum distance separable codes 
      are derived via $\ell$-intersection pairs of linear codes and generalized Reed-Muller codes. 
      Also symplectic LCD codes are constructed from general linear codes. 
      Furthermore, we obtain some binary symplectic LCD codes, which are equivalent to quaternary trace Hermitian additive 
      complementary dual codes that outperform best-known quaternary Hermitian LCD codes 
      reported in the literature. 
      In addition, we prove that symplectic SO and LCD codes obtained in these ways are asymptotically good. 
  \end{abstract}
  {\bf Keywords}: Symplectic inner product, Self-orthogonal code, LCD code, Symplectic maximum distance separable code, 
  Plotkin sum construction  \\ 

\noindent{\bf Mathematics Subject Classification} 94B05 15B05 12E10

\section{Introduction}\label{sec-introduction}

Throughout this paper, let $q=p^m$ be a prime power and $\F_q$ be the finite field with size $q$. 
An $[n,k]_q$ \emph{linear code} $\C$ is a $k$-dimensional linear subspace of $\F_q^n$. 
Let $\C^{\perp}$ be the dual code of $\C$ with respect to a certain inner product (such as the Euclidean, Hermitian or symplectic inner product). 
A linear code $\C$ is said to be \emph{self-orthogonal (SO)} if $\C\subseteq \C^{\perp}$ 
and \emph{linear complementary dual (LCD)} if $\C\cap \C^{\perp}=\{\mathbf{0}\}$.  
Both SO and LCD codes have attracted significant attention in recent years due to their theoretical and practical importance.

On one hand, constructing, enumerating, characterizing and classifying SO codes have remained as four essential 
and dynamic research problems since the beginning of coding theory 
(see \cite{BBGO2006,CPS1992,KC2022,KKL2021,P1972,KL2004,FL2022,Xu,LZM2023} and the references therein).  
Two main factors contribute to the intriguing and broad appeal of SO codes. 
First, Ding \cite{D2009SOgood1}, Zhang $et\ al.$ \cite{ZC2022SOgood2} and Jin $et\ al.$ \cite{JX2011} 
respectively proved that binary Euclidean SO codes, $q$-ary Euclidean SO codes ($q$ is odd) and 
$q$-ary symplectic SO codes are asymptotically good. 
Second, extensive researches have established strong correlations between SO codes and various mathematical fields 
such as combinatorial $t$-design theory \cite{t-design}, group theory \cite{C-1}, lattice theory \cite{C-1,B-1,H-1}, 
modular forms \cite{Shi-book}, and quantum error-correcting codes (QECCs) \cite{CRSS1998,quantum-codes-IT-2}. 
Specifically, finite groups like the Mathieu groups were found to be associated with some SO codes and the extended binary 
%$[24,12,8]_2$ 
SO Golay code was linked to the Conway group. 
Many $5$-designs were also obtained from SO codes \cite{5-design}. 

%Moreover, numerous QECCs have been derived from SO codes, as evidenced in \cite{propagation rules,RefJ-8,RefJ-7,RGB2004,R1999,EGRS,GHM2018,GLM2022}. 

On the other hand, LCD codes were first introduced by Massey \cite{M1992} in 1992, 
which can provide an optimum linear coding solution for two-user 
binary adder channel. Subsequently, Sendrier \cite{LCD-is-good} and G\"uneri 
$et\ al.$ \cite{GOS2016} showed that Euclidean and Hermitian LCD codes are asymptotically 
good. Carlet $et\ al.$ \cite{CG2016SCA} also further developed LCD codes to combat 
side channel attacks (SCAs) and fault injection attacks (FIAs). In 2018, Carlet 
$et\ al.$ \cite{CG2016} proved that any linear code over $\F_q$ is equivalent 
to some Euclidean LCD code for $q>3$ and any linear code over $\F_{q^2}$ is 
equivalent to some Hermitian LCD code for $q>2$. Since then, the focus has been on 
binary, ternary Euclidean LCD codes, quaternary Hermitian LCD codes and $q$-ary 
symplectic LCD codes 
(see for example \cite{AH2020,AHS2020,AHS2021.1,GKLRW2018,234-lcD,HS-BLCD-1-16,LLY2022,XD2021SLCD,HLH2022,LZM2023}).  
Note that Xu $et\ al.$ \cite{XD2021SLCD} and Huang $et\ al.$ \cite{HLH2022} also 
employed symplectic LCD symplectic maximum distance separable (MDS) codes  
to construct maximal entanglement MDS entanglement-assisted QECCs.

Hence, it is always interesting to construct new SO and LCD codes with good parameters and  
it should also be emphasized that a variety of effective techniques have been developed in the literature. 
In particular, one of such excellent methods is the 
so-called Plotkin sum construction \cite{Huffman}, also referred to as the $(\mathbf{u},\mathbf{u+v})$ construction, 
which can generate new linear codes 
from old ones. Very recently, by employing the Plotkin sum construction, 
Li $et\ al.$ \cite{LZM2023} constructed many good Euclidean and Hermitian SO and LCD codes.  
Motivated by this work and the growing interest in SO and LCD codes, a natural problem 
arises: \textbf{Can symplectic SO and LCD codes with good parameters be constructed from 
the Plotkin sum construction?} In this paper, we provide an affirmative answer. 
Main contributions of ours are summarized as follows:

\begin{enumerate}
    \item [\rm (1)] Criteria for linear codes obtained from the Plotkin sum construction 
    being symplectic SO and symplectic LCD codes are characterized in Theorems 
    \ref{th.Plotkin sso} and \ref{th.Plotkin SLCD}.  
    Also symplectic SO and LCD codes obtained from these ways are proved 
    to be asymptotically good with respect to the symplectic distance 
    in Theorems \ref{th.Plotkin sum SO good} and \ref{th.Plotkin sum LCD good}. 
    %Note also that symplectic dual codes of symplectic SO codes are symplectic DC (DC) codes 
    %and symplectic dual codes of symplectic LCD codes are still symplectic LCD codes.  
    
    \item [\rm (2)] Using these two criteria, we further obtain many symplectic SO and LCD codes with good parameters. 
\begin{itemize}
    \item By utilizing $\ell$-intersection pairs of linear codes 
    and generalized Reed-Muller codes, we construct several families of symplectic SO 
    codes with explicit parameters in Theorems \ref{th.symplectic SO.l-intersection}, 
    \ref{th.symplectic SO and DC MDS codes}, \ref{th.symplectic self-dual} 
    and \ref{th.SSO from GRM} as well as Corollary \ref{coro.symplectic self-dual codes}. 
    Many symplectic dual-containing (DC) codes are also produced by considering 
    symplectic dual codes of these symplectic SO codes. 
    As results, lots of symplectic SO and DC codes with good parameters including 
    symplectic MDS codes are deduced.   

    \item By employing general linear codes, %that are not necessarily Euclidean, Hermitian or symplectic LCD, 
    we present a method to construct symplectic LCD codes 
    in Theorem \ref{th.2-ary symplectic LCD codes}. 
    Based on this method, we further obtain some binary symplectic LCD codes, 
    which are equivalent to the so-called quaternary trace Hermitian additive complementary dual codes 
    that outperform best-known quaternary Hermitian LCD codes in Table 2. 
\end{itemize} 
\end{enumerate}

The paper is organized as follows. 
In Section \ref{sec2}, we review some necessary knowledge. 
In Sections \ref{sec3} and \ref{sec4}, we apply the Plotkin sum construction 
to obtain symplectic SO, DC and LCD codes with explicit and good parameters. 
We also study the asymptotic results of symplectic SO and LCD codes from 
the Plotkin sum construction. 
%And many symplectic SO, DC and LCD codes with good parameters will be resulted. 
In Section \ref{sec5}, we conclude this paper.

\section{Preliminaries}\label{sec2}

\subsection{Linear codes}\label{sec2.1}

From now on, we always denote $\mathbf{0}$ as an appropriate zero vector and $O$ as a proper zero matrix. 
Let $\bx_1=(x_{1},x_{2},\ldots,x_{n})$, $\bx_{2}=(x_{n+1},x_{n+2},\ldots,x_{2n})$, 
$\by_1=(y_{1},y_{2},\ldots,y_{n})$ and $\by_{2}=(y_{n+1},y_{n+2},\ldots,y_{2n})$ be any four vectors in $\F_q^{n}$. 
The \emph{Hamming weight} of $\bx_1$ is $\wt_{\rm H}(\bx_1)=|\{i\mid x_i\neq 0, 1\leq i\leq n\}|$ and 
%the Hamming distance of $\bc_1$ and $\bc_2$ is $d_{\rm H}(\bc_1,\bc_1')=\wt_{\rm H}(\bc_1-\bc_1')$ and 
the \emph{minimum Hamming distance} of an $[n,k]_q$ linear code $\C$ is 
$d_{\rm H}(\C)=\min\{\wt_{\rm H}(\bx_1)\mid \bx_1\in \C\ {\rm and}\ \bx_1\neq \mathbf{0}\}$. %Moreover, we denote $\C$ by $[n,k,d_{\rm H}]_q^{\rm H}$. 
Denote $\bx=(\bx_1\mid \bx_2)=(x_{1},\ldots,x_{n},x_{n+1},\ldots,x_{2n})$ and 
$\by=(\by_1\mid \by_2)=(y_{1},\ldots,y_{n},y_{n+1},\ldots,y_{2n})$. 
The \emph{symplectic weight} of $\bx$ is $\wt_{\rm s}(\bx)=|\{i\mid (x_i,x_{n+i})\neq (0,0), 1\leq i\leq n\}|$ and 
%the Hamming distance of $\bc_1$ and $\bc_2$ is $d_{\rm H}(\bc_1,\bc_1')=\wt_{\rm H}(\bc_1-\bc_1')$ and 
the \emph{minimum symplectic distance} of a $[2n,k]_q$ linear code $\C$ is 
$d_{\rm s}(\C)=\min\{\wt_{\rm s}(\bx)\mid \bx\in \C\ {\rm and}\ \bx\neq \mathbf{0}\}$. 
In this paper, we denote $\C$ by $[n,k,d_{\rm H}]_q^{\rm H}$ (resp. $[2n,k,d_{\rm s}]_q^{\rm s}$) 
if $\C$ is an $[n,k]_q$ (resp. a $[2n,k]_q$) linear code with minimum Hamming (resp. symplectic) distance $d_{\rm H}$ (resp. $d_{\rm s}$).  
It is well-known that for an $[n,k,d_{\rm H}]_q^{\rm H}$ (resp. a $[2n,k,d_{\rm s}]_q^{\rm s}$) linear code $\C$, 
the \emph{Hamming (resp. symplectic) Singleton bound} says that 
$d_{\rm H}\leq n-k+1$ (resp. $d_{\rm s}\leq \lfloor \frac{2n-k+2}{2} \rfloor$). 
Then such a linear code $\C$ is called a \emph{Hamming (resp. symplectic) MDS code} 
if $d_{\rm H}=n-k+1$ (resp. $d_{\rm s}=\lfloor \frac{2n-k+2}{2} \rfloor$). 
%If $\C$ is a $[2n,k,d_{\rm s}]_q^{\rm s}$ linear code satisfies $2d_{\rm s}=2n-k$, we call $\C$ \emph{symplectic almost MDS} (\emph{AMDS}). 
%If both $\C$ and $\C^{\perp_{\rm s}}$ are symplectic AMDS, $\C$ is said to be \emph{symplectic nearly MDS} (\emph{NMDS}). 
%Note that there is no symplectic MDS codes if $k$ is odd. 
%For the convenience of the later narrative, we will consider the \emph{modified symplectic Singleton bound} in the following, which says $d_{\rm s}\leq \left \lfloor \frac{2n-k+2}{2} \right \rfloor$. 
%In this sense, for a $[2n,k,d_{\rm s}]_q^{\rm s}$ linear code, when $d_{\rm s}=\left  \lfloor \frac{2n-k+2}{2} \right  \rfloor$, if $k$ is even, we still call it 
%\emph{symplectic MDS}; and if $k$ is odd, we call it \emph{symplectic optimal}. 

Let $\Omega=\left[\begin{array}{cc}
    O & I_n \\ 
    -I_n & O
\end{array}\right]$, where $I_n$ is the identity matrix of size $n\times n$.  %and other notations be the same as above. 
The \emph{Euclidean inner product} of $\bx_1$ and $\by_1$ is defined by 
\begin{align}
    \langle \bx_1, \by_1 \rangle_{\rm E}=\sum_{i=1}^{n}x_{i}y_{i}
\end{align}
and the \emph{symplectic inner product} of $\bx$ and $\by$ is defined by 
\begin{align}
    \langle \bx, \by \rangle_{\rm s}= \bx \Omega \by^T = \sum_{i=1}^{n}(x_iy_{n+i}-x_{n+i}y_i).  
\end{align}
If $\C$ is an $[n,k,d_{\rm H}]_q^{\rm H}$ linear code, its \emph{Euclidean dual code} is given by 
\begin{align}\label{eq.Edual}
    \C^{\perp_{\rm E}}=\{\by_1\mid \langle \bx_1, \by_1 \rangle_{\rm E}=0,\ \forall\ \bx_1\in \C\}  
\end{align}
and $\C^{\perp_{\rm E}}$ has parameters $[n,n-k,d_{\rm H}^{\perp_{\rm E}}]_q^{\rm H}$, where $d_{\rm H}^{\perp_{\rm E}}$ denotes the minimum Hamming distance of $\C^{\perp_{\rm E}}$.  
%Moreover, we call $\C$ an $[n,k,d_{\rm H}]_q^{\rm H}$ \emph{linear code with Euclidean dual minimum Hamming distance $d_{\rm H}^{\perp_{\rm E}}$} if 
%$\C$ has parameters $[n,k,d_{\rm H}]_q^{\rm H}$ and $\C^{\perp_{\rm E}}$ has parameters $[n,n-k,d_{\rm H}^{\perp_{\rm E}}]_q^{\rm H}$.
If $\C$ is a $[2n,k,d_{\rm s}]_q^{\rm s}$ linear code, its \emph{symplectic dual code} is given by 
\begin{align}\label{eq.sdual}
    \C^{\perp_{\rm s}}=\{\by\mid \langle \bx, \by \rangle_{\rm s}=0,\ \forall\ \bx\in \C\}  
\end{align}
and $\C^{\perp_{\rm s}}$ has parameters $[n,n-k,d_{\rm s}^{\perp_{\rm s}}]_q^{\rm s}$, where $d_{\rm s}^{\perp_{\rm s}}$ denotes the minimum symplectic distance of $\C^{\perp_{\rm s}}$. 
As defined previously, we call $\C$ \emph{Euclidean} (resp. \emph{symplectic}) \emph{SO} if $\C\subseteq \C^{\perp_{\rm E}}$ (resp. $\C\subseteq \C^{\perp_{\rm s}}$), 
call $\C$ \emph{Euclidean} (resp. \emph{symplectic}) \emph{DC} if $\C^{\perp_{\rm E}}\subseteq \C$ (resp. $\C^{\perp_{\rm s}}\subseteq \C$), and 
call $\C$ \emph{Euclidean} (resp. \emph{symplectic}) \emph{LCD} if $\C\cap \C^{\perp_{\rm E}}=\{\mathbf{0}\}$ (resp. $\C\cap \C^{\perp_{\rm s}}=\{\mathbf{0}\}$). 
In addition, it is easy to check that $(\C^{\perp_{\rm E}})^{\perp_{\rm E}}=\C$ and $(\C^{\perp_{\rm s}})^{\perp_{\rm s}}=\C$. 
Note that symplectic SO and LCD codes are characterized by the following two lemmas. 
\begin{lemma}[Theorem 1 in \cite{Xu}]\label{lem.symplectic SO}
    Let $\C$ be a $[2n,k]_q$ linear code with a generator matrix $G$. 
    Then $\C$ is a symplectic SO code if and only if $G\Omega G^T$ is a zero matrix.   
\end{lemma}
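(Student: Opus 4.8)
The plan is to reduce the containment $\C\subseteq\C^{\perp_{\rm s}}$ to a finite list of scalar conditions and then recognize that list as the entries of a single matrix. Write $g_1,\ldots,g_k$ for the rows of $G$, so that $\C=\{\sum_{i=1}^{k}a_ig_i : a_i\in\F_q\}$ and the rows form a spanning set of $\C$. By definition, $\C$ is symplectic SO precisely when $\langle\bx,\by\rangle_{\rm s}=0$ for every pair $\bx,\by\in\C$, so the whole task is to show that this (a priori infinite) family of conditions collapses to the matrix identity $G\Omega G^T=O$.

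The first step is to record that the symplectic form $\langle\bx,\by\rangle_{\rm s}=\bx\Omega\by^T$ is $\F_q$-bilinear, which is immediate from its matrix definition. Consequently, for arbitrary $\bx=\sum_i a_ig_i$ and $\by=\sum_j b_jg_j$ in $\C$ one has $\langle\bx,\by\rangle_{\rm s}=\sum_{i,j}a_ib_j\langle g_i,g_j\rangle_{\rm s}$. This expansion shows that $\langle\bx,\by\rangle_{\rm s}=0$ for all $\bx,\by\in\C$ holds if and only if $\langle g_i,g_j\rangle_{\rm s}=0$ for all $1\le i,j\le k$: the forward direction is trivial, since each row $g_i$ is itself a codeword, and the reverse direction is exactly where bilinearity is used to propagate orthogonality from the spanning set to all of $\C$.

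The second step is pure bookkeeping: the $(i,j)$ entry of $G\Omega G^T$ is $g_i\Omega g_j^T=\langle g_i,g_j\rangle_{\rm s}$. Hence the $k^2$ scalar conditions $\langle g_i,g_j\rangle_{\rm s}=0$ are equivalent to the single statement $G\Omega G^T=O$, which closes the chain of equivalences and proves the lemma.

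I expect no genuine obstacle here; the result is essentially the symplectic analogue of the Euclidean Gram-matrix criterion, and the content lives entirely in the bilinearity of $\langle\cdot,\cdot\rangle_{\rm s}$. The only point deserving a word of care is the passage from all codewords to the generators in the reverse implication, and I would handle both directions at once using the displayed bilinear expansion above. (One may note in passing that the diagonal entries $\langle g_i,g_i\rangle_{\rm s}$ vanish automatically by the antisymmetry of $\Omega$, which is consistent with $G\Omega G^T=O$ but is not needed for the argument.)
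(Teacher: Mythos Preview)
Your proof is correct: reducing the self-orthogonality condition to the pairwise symplectic products of the rows via bilinearity, and then identifying those with the entries of $G\Omega G^T$, is exactly the right argument. Note that the paper does not supply its own proof of this lemma---it is quoted from \cite{Xu}---so there is nothing further to compare.
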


\begin{lemma}[Theorem 1 in \cite{XD2021SLCD}]\label{lem.symplectic LCD}
    Let $\C$ be a $[2n,k]_q$ linear code with a generator matrix $G$. 
    Then $\C$ is a symplectic LCD code if and only if $G\Omega G^T$ is nonsingular.    
\end{lemma}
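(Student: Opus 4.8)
The plan is to translate the defining condition $\C\cap\C^{\perp_{\rm s}}=\{\mathbf 0\}$ of a symplectic LCD code into a rank condition on the $k\times k$ matrix $M:=G\Omega G^{T}$. First I would record the key observation that every codeword of $\C$ has the form $\ba G$ for a \emph{unique} $\ba\in\F_q^{k}$, uniqueness holding because $G$ has full row rank $k$. By bilinearity of $\langle\cdot,\cdot\rangle_{\rm s}$, a codeword $\bc=\ba G$ lies in $\C^{\perp_{\rm s}}$ exactly when $\langle \ba G,\,\bu G\rangle_{\rm s}=0$ for every $\bu\in\F_q^{k}$ (it suffices to test against the rows of $G$). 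Expanding the symplectic form gives $\langle \ba G,\bu G\rangle_{\rm s}=\ba G\Omega G^{T}\bu^{T}=\ba M\bu^{T}$, and this scalar vanishes for all $\bu$ if and only if $\ba M=\mathbf 0$.

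Second, I would assemble these facts into a clean correspondence. The map $\ba\mapsto \ba G$ is an $\F_q$-isomorphism from $\F_q^{k}$ onto $\C$ (again by $\rank(G)=k$), and under it the subspace $\C\cap\C^{\perp_{\rm s}}$ corresponds precisely to the left null space $\{\ba\in\F_q^{k}\mid \ba M=\mathbf 0\}$ of $M$. In particular $\dim(\C\cap\C^{\perp_{\rm s}})=k-\rank(M)$, which already packages both implications.

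Finally I would read off the two directions. If $M$ is nonsingular, its left null space is trivial, so $\C\cap\C^{\perp_{\rm s}}=\{\mathbf 0\}$ and $\C$ is symplectic LCD. Conversely, if $M$ is singular, there is a nonzero $\ba$ with $\ba M=\mathbf 0$; since $G$ has full row rank, $\bc=\ba G\neq\mathbf 0$, yet $\bc\in\C\cap\C^{\perp_{\rm s}}$, so $\C$ is not symplectic LCD.

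I do not expect a genuine obstacle here: the argument is a direct transcription of the LCD condition through the generator matrix. The only point requiring care is the repeated use of $\rank(G)=k$, which is what guarantees both the uniqueness of the coordinate vector $\ba$ and that a nonzero $\ba$ produces a nonzero codeword $\ba G$—without full rank, the left null space of $M$ and the intersection $\C\cap\C^{\perp_{\rm s}}$ need not match. A minor secondary step is the reduction of ``$\langle\bc,\bc'\rangle_{\rm s}=0$ for all $\bc'\in\C$'' to testing only the rows of $G$, which is immediate from bilinearity of $\langle\cdot,\cdot\rangle_{\rm s}$.
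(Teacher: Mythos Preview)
Your argument is correct and complete. Note that the paper does not actually prove this lemma itself; it is quoted without proof from the cited reference \cite{XD2021SLCD}, so there is no in-paper proof to compare against. Your approach---identifying $\C\cap\C^{\perp_{\rm s}}$ with the left null space of $G\Omega G^{T}$ via the isomorphism $\ba\mapsto\ba G$---is the standard one and matches what one would expect the original reference to do.
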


\subsection{Additive codes}\label{sec2.2}

An $(n,q^k)_{q^2}$ {\em additive code} is an $\F_q$-linear subgroup of $\F_{q^2}^n$, which has size $q^k$ for some integer $k$ satisfying $0\leq k\leq 2n$. 
An $(n,q^k, d_{\rm H})_{q^2}^{\rm H}$ {\em additive code} is an $(n,q^k)_{q^2}$ additive code with minimum Hamming distance $ d_{\rm H}$.
%A {\em generator matrix} of an additive $(n,q^k)_{q^2}$ code $\mathcal{C}$ is a $k\times n$ matrix $G$ with entries in $\F_{q^2}$ such that $\mathcal{C}=\{{\bf x}G~|~{\bf x}\in \F_q^k\}$.
Let $\F_{q^2}^*=\langle\omega\rangle$.
For ${\bf u} = (u_1, u_2, \ldots, u_n)$ and $\textbf v = (v_1, v_2, \ldots, v_n)\in \F_{q^2}^{n}$,
the {\em alternating form} of ${\bf u}$ and ${\bf v}$ is defined by
\begin{align}\label{inner-alter}
  \langle \textbf u, \textbf v\rangle_{\rm a} =\sum_{i=1}^n \frac{u_iv_i^q- u_i^qv_i}{\omega-\omega^q}.
\end{align}
We use $\mathcal{C}^{\perp_{\rm a} }$ to denote the dual code of an additive code $\mathcal{C}$ under the alternating form  
and call $\mathcal{C}$ {\em additive SO} if $\mathcal{C} \subseteq \mathcal{C}^{\perp_{\rm a} }$ 
and {\em additive complementary dual (ACD)} if $\mathcal{C} \cap \mathcal{C}^{\perp_{\rm a} }=\{{\bf 0}\}$. 
Consider the following map
\begin{align*}
  \phi: ~\F_{q}^{2n} &\rightarrow \F_{q^2}^n, \\
  (a_1,\ldots,a_n,a_{n+1},\ldots,a_{2n}) &\mapsto (a_1+\omega a_{n+1},\ldots,a_{n}+\omega a_{2n}).
\end{align*}
It can be checked that $\langle {\bf u},{\bf v}\rangle_{\rm s} =\langle \phi({\bf u}), \phi({\bf v})\rangle_{\rm a} $ 
for any ${\bf u},{\bf v}\in \F_q^{2n}$ (see \cite{quantum-codes-IT-2}).
It is also easy to verify that $\phi$ is an isomorphic map from ($\F_q^{2n}$, $d_{\rm s}$, $\langle \cdot,\cdot\rangle_{\rm s} $) to ($\F_{q^2}^n$, $d_{\rm H}$, $\langle \cdot,\cdot\rangle_{\rm a} $).
%In particular, if $C$ is a linear $[2n, k, d]_q^{\rm s}$ code, then $\phi(C)$ is an additive $(n,q^k,d)_{q^2}^{\rm H}$ code over $\F_{q^2}$.
Therefore, a $[2n,k,d]_q^{\rm s}$ linear code is equivalent to an $(n,q^{k},d)^{\rm H}_{q^2}$ additive code.

\subsection{Asymptotic results}\label{sec2.3}

Finally, we end this section with the concept of asymptotic results. 
For an infinite subset of  $[n_i,k_i,d_i]_q^{\rm s}$ codes from a family of linear codes with $\lim_{i \rightarrow \infty}n_i=\infty$, 
if both of its \emph{asymptotic rate}  $$r=\liminf_{i \rightarrow \infty}\frac{k_i}{n_i}$$ 
and \emph{asymptotic relative Hamming (resp. symplectic) distance} $$\delta=\liminf_{i \rightarrow \infty}\frac{d_i}{n_i}$$ 
%\begin{align}\label{eq.asymptotic rate and asymptotic relative distance}
%    r=\liminf_{i \rightarrow \infty}\frac{k_i}{n_i}\ {\rm\ and}\ \delta=\liminf_{i \rightarrow \infty}\frac{d_i}{n_i}
%\end{align}
are greater than $0$, we call this family of linear code \emph{asymptotically good with respect to the Hamming (resp. symplectic) distance}. 
As pointed out in \cite{Huffman}, 
asymptotically good linear codes are generally considered interesting and one prefers to delve into a family of asymptotically good codes 
in coding theory.

\section{Symplectic SO codes}\label{sec3}
\subsection{The characterization and the asymptotic result of symplectic SO codes from the Plotkin sum construction}\label{sec3.1}

\begin{definition}[\cite{Huffman}]\label{def.Plotkin sum}
    Let $\C_i$ be an $[n,k_i]_q$ linear code for $i=1, 2$. 
    Let $G_i$ and $H_i$ be respectively a generator matrix 
    and a parity check matrix of $\C_i$ for $i=1, 2$. 
    The \emph{Plotkin sum} of $\C_1$ and $\C_2$ is a 
    $[2n,k_1+k_2]_q$ linear code $\PP(\C_1,\C_2)$ defined by 
    \begin{align}\label{eq.def.Plotkin sum}
        \PP(\C_1,\C_2)=\{(\mathbf{u},\mathbf{u+v})\mid \mathbf{u}\in \C_1,\ \mathbf{v}\in \C_2\}, 
    \end{align}
    whose generator matrix and parity check matrix are respectively 
    \begin{align}\label{eq.Plotkin sum matrix}
        G=\left(\begin{array}{cc}
            G_1 & G_1 \\
            O & G_2 
        \end{array}\right)\ {\rm and}\  H=\left(\begin{array}{cc}
            H_1 & O \\
            -H_2 & H_2 
        \end{array}\right).
    \end{align}
\end{definition}

%It should be noted that a variety of techniques have been developed in the literature to construct SO and LCD codes \cite{LZM2023}.
%One of these techniques is the so-called Plotkin sum construction \cite{Huffman}, also referred to as the $(\mathbf{u},\mathbf{u+v})$ construction, 
%which can generate new linear codes from old ones. Next, we consider the performance of the Plotkin sum construction on symplectic SO codes. 
%Let $O$ denote an appropriate zero matrix.

\begin{theorem}\label{th.Plotkin sso}
	Let $\C_i$ be an  $[n,k_i,d_{i}]_q^{\rm H}$ linear code for $i=1, 2$.
    Then the following three statements are equivalent.
    \begin{enumerate}
        \item [\rm (1)] $\PP(\C_1,\C_2)$ is a symplectic SO $[2n,k_1+k_2,\min\{d_1,d_2\}]_q^{\rm s}$ code.
        \item [\rm (2)] $\PP(\C_1,\C_2)^{\perp_{\rm s}}$ is a symplectic DC 
        $[2n,2n-k_1-k_2,\min\{d_{\rm H}(\C_1^{\perp_{\rm E}}),d_{\rm H}(\C_2^{\perp_{\rm E}})\}]_q^{\rm s}$ code.
        \item [\rm (3)] $\C_1\subseteq \C_2^{\perp_{\rm E}}$.
    \end{enumerate}
    %$\C$ is a $[2n,k_1+k_2,\min\{d_1,d_2\}]_q^{\rm s}$ symplectic SO code
    %if and only if $\C^{\perp_{\rm s}}$ is a $[2n,2n-k_1-k_2,\min\{d_1^{\perp_{\rm E}},d_2^{\perp_{\rm E}}\}]_q^{\rm s}$ symplectic DC code,
    %if and only if $\C_1\subseteq \C_2^{\perp_{\rm E}}$.
\end{theorem}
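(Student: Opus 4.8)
The plan is to route the whole theorem through Lemma~\ref{lem.symplectic SO}, which reduces symplectic self-orthogonality of $\PP(\C_1,\C_2)$ to the vanishing of $G\Omega G^T$ for the generator matrix $G$ in \eqref{eq.Plotkin sum matrix}. First I would compute this $(k_1+k_2)\times(k_1+k_2)$ block product directly. Using the given form of $\Omega$ one finds
\[
G\Omega=\begin{pmatrix} -G_1 & G_1 \\ -G_2 & O\end{pmatrix},\qquad
G\Omega G^{T}=\begin{pmatrix} O & G_1G_2^{T} \\ -G_2G_1^{T} & O\end{pmatrix}.
\]
Since $-G_2G_1^{T}=-(G_1G_2^{T})^{T}$, the right-hand matrix vanishes if and only if $G_1G_2^{T}=O$. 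The entries of $G_1G_2^{T}$ are exactly the Euclidean inner products of the rows of $G_1$ with those of $G_2$, so $G_1G_2^{T}=O$ is equivalent to $\C_1\subseteq\C_2^{\perp_{\rm E}}$. Combined with Lemma~\ref{lem.symplectic SO}, this already yields the equivalence of (1) and (3) at the level of the self-orthogonality property.

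It then remains to pin down the parameters asserted in (1). The length $2n$ and dimension $k_1+k_2$ are immediate from the block-triangular shape of $G$. For the minimum symplectic distance I would establish the pointwise identity
\[
\wt_{\rm s}\bigl((\bu,\bu+\bv)\bigr)=\bigl|{\rm supp}(\bu)\cup{\rm supp}(\bv)\bigr|,
\]
which holds because the $i$-th symplectic coordinate pair $(u_i,u_i+v_i)$ is $(0,0)$ precisely when $u_i=v_i=0$. Minimizing over nonzero $(\bu,\bv)$, the right-hand side is at least $\min\{d_1,d_2\}$, and this value is attained on $(\bu,\bu)$ with $\bu$ of minimum weight in $\C_1$ and on $(\mathbf{0},\bv)$ with $\bv$ of minimum weight in $\C_2$. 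Hence $d_{\rm s}(\PP(\C_1,\C_2))=\min\{d_1,d_2\}$, completing statement (1).

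For (1)$\Leftrightarrow$(2) I would first invoke the general duality fact that, because $(\C^{\perp_{\rm s}})^{\perp_{\rm s}}=\C$, a code is symplectic SO if and only if its symplectic dual is symplectic DC; applied to $\C=\PP(\C_1,\C_2)$ this gives the logical content of the biconditional for free. The parameters of $\PP(\C_1,\C_2)^{\perp_{\rm s}}$ are then identified unconditionally. Writing $\C^{\perp_{\rm s}}=(\C\Omega)^{\perp_{\rm E}}$ and using the $G\Omega$ computed above, a vector $(\mathbf{a}\mid\mathbf{b})$ lies in the dual iff $\mathbf{a}\in\C_2^{\perp_{\rm E}}$ and $\mathbf{b}-\mathbf{a}\in\C_1^{\perp_{\rm E}}$, so that
\[
\PP(\C_1,\C_2)^{\perp_{\rm s}}=\bigl\{(\bu,\bu+\bv)\mid \bu\in\C_2^{\perp_{\rm E}},\ \bv\in\C_1^{\perp_{\rm E}}\bigr\}=\PP\bigl(\C_2^{\perp_{\rm E}},\C_1^{\perp_{\rm E}}\bigr).
\]
The dimension count $(n-k_2)+(n-k_1)=2n-k_1-k_2$ and, by reapplying the symplectic-weight identity to this Plotkin sum, the minimum symplectic distance $\min\{d_{\rm H}(\C_1^{\perp_{\rm E}}),d_{\rm H}(\C_2^{\perp_{\rm E}})\}$ then match statement (2) exactly.

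The one step that needs genuine care --- the main obstacle --- is the explicit determination of $\PP(\C_1,\C_2)^{\perp_{\rm s}}$: one must correctly track the interchange (the symplectic dual is the Plotkin sum of $\C_2^{\perp_{\rm E}}$ and $\C_1^{\perp_{\rm E}}$, with the components swapped) and verify the dimension is precisely $2n-k_1-k_2$, so that the two orthogonality conditions cut out the \emph{entire} dual rather than a proper subcode. Everything else is either a direct block computation or a mechanical application of the symplectic-weight identity.
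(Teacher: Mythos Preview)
Your proposal is correct and follows essentially the same route as the paper: the block computation of $G\Omega G^{T}$ together with Lemma~\ref{lem.symplectic SO} for (1)$\Leftrightarrow$(3), the double-dual identity for (1)$\Leftrightarrow$(2), and the identification $\PP(\C_1,\C_2)^{\perp_{\rm s}}=\PP(\C_2^{\perp_{\rm E}},\C_1^{\perp_{\rm E}})$ to read off the dual parameters. The only cosmetic differences are that the paper reaches the dual via $\C^{\perp_{\rm s}}=\C^{\perp_{\rm E}}\Omega$ and the parity-check matrix $H\Omega$ rather than your $(\C\Omega)^{\perp_{\rm E}}$ description, and it handles $d_{\rm s}$ by a two-case split on $\bv=\mathbf{0}$ versus $\bv\neq\mathbf{0}$ instead of your support-union identity $\wt_{\rm s}((\bu,\bu+\bv))=|\mathrm{supp}(\bu)\cup\mathrm{supp}(\bv)|$; both yield the same bounds and the same attained minimum.
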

\begin{proof}
First, we prove that $\PP(\C_1,\C_2)$ and $\PP(\C_1,\C_2)^{\perp_{\rm s}}$ have parameters $[2n,k_1+k_2,\min\{d_1,d_2\}]_q^{\rm s}$ and 
$[2n,2n-k_1-k_2,\min\{d_{\rm H}(\C_1^{\perp_{\rm E}}),d_{\rm H}(\C_2^{\perp_{\rm E}})\}]_q^{\rm s}$, respectively.
Then we reduce the three statements to three simpler statements.
With Definition \ref{def.Plotkin sum}, $\PP(\C_1,\C_2)$ is a $[2n,k_1+k_2]_q$ linear code, 
and then $\PP(\C_1,\C_2)^{\perp_{\rm s}}$ is a $[2n,2n-k_1-k_2]_q$ linear code.
Hence, it suffices to prove that $d_{\rm s}(\PP(\C_1,\C_2))=\min\{d_1,d_2\}$ and 
$d_{\rm s}(\PP(\C_1,\C_2))^{\perp_{\rm s}}=\min\{d_{\rm H}(\C_1^{\perp_{\rm E}}),d_{\rm H}(\C_2^{\perp_{\rm E}})\}$.  \vspace{2mm}

\textbf{Case 1: the proof of $d_{\rm s}(\PP(\C_1,\C_2))=\min\{d_1,d_2\}$.} Let $(\uuu,\uuu+\vvv)\in \PP(\C_1,\C_2)$ 
be any nonzero codeword with $\uuu\in \C_1$ and $\vvv\in \C_2$.
Then $\uuu\neq \mathbf{0}$ or $\vvv\neq \mathbf{0}$. We have two subcases.
    \begin{itemize}
        \item If $\vvv=\mathbf{0},$ since $\uuu$ is a nonzero codeword in $\C_1$,  we have that 
        $$\wt_{\rm s}((\uuu,\uuu+\vvv))=\wt_{\rm s}((\uuu,\uuu))=\wt_{\rm H}(\uuu)\geq d_1.$$

        \item If $\vvv\neq \mathbf{0},$ let $\uuu=(u_1,u_2,\ldots,u_{n})$ and $\vvv=(v_1,v_2,\ldots,v_n)$.
        %Then $\wt_{\rm s}((\uuu_1,\uuu_1+\uuu_2))=\#\{i:\ (u_i,u_i+u_i')\neq (0,0),\ 1\leq i\leq n\}$ and $\wt_{\rm H}(\uuu_2)=\#\{i:\ u_i'\neq 0,\ 1\leq i\leq n\}$.
        Note that for each integer $1\leq i\leq n$, if $v_{i}\neq 0$, then $(u_{i},u_{i}+v_{i})\neq (0,0)$ whether $u_{i}=0$ or not;
        if $v_i=0$, then $(u_i,u_i+v_i)=(0,0)$ if and only if $u_i=0$. Hence, we have that 
        $$\wt_{\rm s}((\uuu,\uuu+\vvv))\geq \wt_{\rm H}(\vvv)\geq d_2.$$
    \end{itemize}
    Then it follows that $d_{\rm s}(\PP(\C_1,\C_2))\geq \min\{d_1,d_2\}$.

    Conversely, for $i=1,2$, since $d_{\rm H}(\C_i)=d_i$, there is a codeword $\ccc_i\in \C_i$ such that $\wt_{\rm H}(\ccc_i)=d_i$.
    Note that both $(\ccc_1,\ccc_1)$ and $(\mathbf{0},\ccc_2)$ are codewords in $\PP(\C_1,\C_2)$. Then
    \begin{align*}
        \begin{split}
            d_{\rm s}(\PP(\C_1,\C_2)) & \leq \min\{\wt_{\rm s}((\ccc_1,\ccc_1)),\wt_{\rm s}((\mathbf{0},\ccc_2))\} \\
                     & =\min\{\wt_{\rm H}(\ccc_1),\wt_{\rm H}(\ccc_2)\} \\
                     & = \min\{d_1,d_2\}.
        \end{split}
    \end{align*}
    In summary, $d_{\rm s}(\PP(\C_1,\C_2))=\min\{d_1,d_2\}$. This completes the proof of \textbf{Case 1.} \vspace{2mm}

\textbf{Case 2: the proof of $d_{\rm s}(\PP(\C_1,\C_2))^{\perp_{\rm s}}=\min\{d_{\rm H}(\C_1^{\perp_{\rm E}}),d_{\rm H}(\C_2^{\perp_{\rm E}})\}$.} 
From Definition \ref{def.Plotkin sum}, $\PP(\C_1,\C_2)^{\perp_{\rm E}}$ has a generator matrix $H=\left(\begin{array}{cc}
        H_1 & O \\
        -H_2 & H_2
    \end{array}\right).$ On the other hand, from the definitions of Euclidean dual codes and symplectic dual codes,
    we can deduce that $\PP(\C_1,\C_2)^{\perp_{\rm s}}=\PP(\C_1,\C_2)^{\perp_{\rm E}}\Omega_n=\{\bc \Omega_n \mid \bc\in \PP(\C_1,\C_2)^{\perp_{\rm E}}\}$.
    This turns out that
    $$H\Omega_n=\left(\begin{array}{cc}
        H_1 & O \\
        -H_2 & H_2
    \end{array}\right)\left(\begin{array}{cc}
        O & I_n \\
        -I_n & O
    \end{array}\right)=\left(\begin{array}{cc}
        O & H_1 \\
        -H_2 & -H_2
    \end{array}\right)$$
    is a generator matrix of $\PP(\C_1,\C_2)^{\perp_{\rm s}}$.
    Since $\PP(\C_1,\C_2)^{\perp_{\rm s}}$ is linear and $H_i$ is a generator matrix of $\C_i^{\perp_{\rm E}}$ for $i=1,2$, it is again inferred from
    Definition \ref{def.Plotkin sum} that $\PP(\C_1,\C_2)^{\perp_{\rm s}}$ can be viewed as the Plotkin sum of $\C_2^{\perp_{\rm E}}$
    and $\C_1^{\perp_{\rm E}}$, $i.e.$, $\PP(\C_1,\C_2)^{\perp_{\rm s}}=\PP(\C_2^{\perp_{\rm E}}, \C_1^{\perp_{\rm E}})$.
    Hence, $\PP(\C_1,\C_2)^{\perp_{\rm s}}$ has parameters 
    $[2n,2n-k_1-k_2,\min\{d_{\rm H}(\C_1^{\perp_{\rm E}}),d_{\rm H}(\C_2^{\perp_{\rm E}})\}]_q^{\rm s}$
    according to \textbf{Case 1}. This completes the proof of \textbf{Case 2}.  \vspace{2mm}

By the results above, we only need to prove that the following three simpler statements are equivalent:  
(1') $\PP(\C_1,\C_2)$ is a symplectic SO code; 
(2') $\PP(\C_1,\C_2)^{\perp_{\rm s}}$ is a symplectic DC code; 
and (3') $\C_1\subseteq \C_2^{\perp_{\rm E}}$.   \vspace{2mm}

(1') $\Leftrightarrow$ (2') Since $(\PP(\C_1,\C_2)^{\perp_{\rm s}})^{\perp_{\rm s}}=\PP(\C_1,\C_2)$, we have $\PP(\C_1,\C_2)\subseteq \PP(\C_1,\C_2)^{\perp_{\rm s}}$
if and only if $(\PP(\C_1,\C_2)^{\perp_{\rm s}})^{\perp_{\rm s}}\subseteq \PP(\C_1,\C_2)^{\perp_{\rm s}}$.
Hence $\PP(\C_1,\C_2)$ is symplectic SO if and only if $\PP(\C_1,\C_2)^{\perp_{\rm s}}$ is symplectic DC.

(1') $\Leftrightarrow$ (3') By Definition \ref{def.Plotkin sum}, $\PP(\C_1,\C_2)$ has a generator matrix $G=\left(\begin{array}{cc}
        G_1 & G_1 \\
        O & G_2
\end{array}\right).$ Then we have
\begin{equation}\label{eq.GOmegaGT}
        \begin{split}
        G\Omega_n G^T & = \left( \begin{array}{cc}
        G_1 & G_1 \\
        O & G_2 \\
        \end{array} \right)
        \left( \begin{array}{cc}
            O & I_n \\
            -I_n & O \\
        \end{array} \right)
        \left( \begin{array}{cc}
        G_1^T & O \\
        G_1^T & G_2^T \\
        \end{array} \right)
         = \left( \begin{array}{cc}
        O & G_1G_2^T \\
        -G_2G_1^T & O \\
        \end{array} \right).
        \end{split}
\end{equation}

It follows from Lemma \ref{lem.symplectic SO} and Equation (\ref{eq.GOmegaGT}) that 
$\PP(\C_1,\C_2)$ is symplectic SO if and only if $G\Omega_n G^T=O$, 
if and only if $-G_2G_1^T=-(G_1G_2^T)^T=O$, 
if and only if $\C_1\subseteq \C_2^{\perp_{\rm E}}$. 
Therefore, the statements (1') and (3') are equivalent.

In summary, the statements (1), (2) and (3) are equivalent. This completes the proof.
\end{proof}

We next show that symplectic SO codes obtained by Theorem \ref{th.Plotkin sso} are asymptotically good. 
It is therefore an interesting class of linear codes. 

\begin{theorem}\label{th.Plotkin sum SO good}
    Let $q=2$ or $q$ be an odd prime power. Then $q$-ary symplectic SO codes derived from Theorem \ref{th.Plotkin sso} 
    are asymptotically good with respect to the symplectic distance.  
    %In other words, $q$-ary symplectic SO codes obtained from Theorem \ref{th.Plotkin sso} are asymptotically good with respect to the symplectic distance when $q=2$ or $q$ is an odd prime power.   
\end{theorem}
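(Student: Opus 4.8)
The plan is to reduce the statement to the already-established asymptotic goodness of Euclidean SO codes, exploiting the clean equivalence furnished by Theorem \ref{th.Plotkin sso}. The crucial observation is that condition (3) of that theorem, namely $\C_1\subseteq \C_2^{\perp_{\rm E}}$, is satisfied in the simplest possible way when we set $\C_1=\C_2=\C$ for a single Euclidean SO code $\C$: indeed $\C\subseteq \C^{\perp_{\rm E}}$ is exactly the defining condition of Euclidean self-orthogonality. Thus every Euclidean SO $[n,k,d]_q^{\rm H}$ code $\C$ yields, via $\PP(\C,\C)$, a symplectic SO code whose parameters are given directly by Theorem \ref{th.Plotkin sso} as $[2n,2k,d]_q^{\rm s}$.

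First I would fix, for each admissible $q$, an infinite family of Euclidean SO codes that is asymptotically good with respect to the Hamming distance. For $q=2$ such a family is provided by Ding \cite{D2009SOgood1}, and for odd prime powers $q$ by Zhang $et\ al.$ \cite{ZC2022SOgood2}; this is precisely why the hypothesis restricts to $q=2$ or $q$ odd. Writing the family as $\{\C_i\}_{i\geq 1}$ with each $\C_i$ an $[n_i,k_i,d_i]_q^{\rm H}$ Euclidean SO code and $\lim_{i\to\infty}n_i=\infty$, asymptotic goodness means $\liminf_{i\to\infty}k_i/n_i=r_0>0$ and $\liminf_{i\to\infty}d_i/n_i=\delta_0>0$.

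Next I would apply the construction above to obtain the symplectic SO codes $\PP(\C_i,\C_i)$ with parameters $[2n_i,2k_i,d_i]_q^{\rm s}$, forming an infinite family of symplectic codes of length $2n_i\to\infty$. Its asymptotic rate is $\liminf_{i\to\infty}\frac{2k_i}{2n_i}=\liminf_{i\to\infty}\frac{k_i}{n_i}=r_0>0$, and its asymptotic relative symplectic distance is $\liminf_{i\to\infty}\frac{d_i}{2n_i}=\frac{1}{2}\delta_0>0$. Since both quantities are strictly positive, the family is asymptotically good with respect to the symplectic distance, as required.

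The argument is essentially a reduction, so no new estimate is needed; the only points demanding care are bookkeeping ones. I must keep the length normalization consistent—$\PP(\C_i,\C_i)$ genuinely has length $2n_i$, so dividing by $2n_i$ rather than $n_i$ is what halves the relative distance while leaving positivity intact. The one genuine conceptual dependency, and hence the real obstacle, is the \emph{existence} of the asymptotically good Euclidean SO families themselves; since these are quoted from the literature for exactly the ranges $q=2$ and $q$ odd, the theorem inherits precisely that case distinction.
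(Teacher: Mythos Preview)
Your proposal is correct and follows essentially the same approach as the paper: set $\C_1=\C_2=\C_i$ for an asymptotically good family of Euclidean SO codes (from \cite{D2009SOgood1} for $q=2$ and \cite{ZC2022SOgood2} for odd $q$), apply Theorem~\ref{th.Plotkin sso} to get symplectic SO $[2n_i,2k_i,d_i]_q^{\rm s}$ codes, and observe that the rate is preserved while the relative symplectic distance is halved but remains positive. The paper's proof is essentially a terser version of exactly this argument.
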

\begin{proof}
    %We first prove the desired result in terms of the symplectic distance. 
    It follows from \cite[Remark 3]{D2009SOgood1} and \cite[Corollary 4.2]{ZC2022SOgood2} that there exist asymptotically good binary 
    and $q$-ary ($q$ is odd) Euclidean SO codes with respect to the Hamming distance.  
    Hence, we assume that $\{\C_i\}_{i=0}^{\infty}$ is an infinite subset of $[n_i,k_i,d_i]_q^{\rm H}$ codes from asymptotically good 
    $q$-ary Euclidean SO codes, where $q=2$ or $q$ is odd.  
    Then it implies that $\lim_{i \rightarrow \infty}n_i=\infty$ with $\liminf_{i\rightarrow \infty}\frac{k_i}{n_i}>0$ 
    and $\liminf_{i\rightarrow \infty}\frac{d_i}{n_i}>0$. 

    Since $\C_i$ is Euclidean SO, $i.e.$, $\C_i\subseteq \C_i^{\perp_{\rm E}}$, it follows form Theorem \ref{th.Plotkin sso} that 
    $\PP(\C_i,\C_i)$ is a $[2n_i,2k_i,d_i]_q^{\rm s}$ symplectic SO code, denoted by $\PP_i$.  
    Note that $\lim_{i \rightarrow \infty}2n_i=\infty$. 
    Also for the infinite sequence of symplectic SO codes $\{\PP_i\}_{i=0}^\infty$,  
    the asymptotic rate of $\{\PP_i\}_{i=0}^\infty$ is 
    \begin{align}
        r=\liminf_{i\rightarrow \infty}\frac{2k_i}{2n_i}=\liminf_{i\rightarrow \infty}\frac{k_i}{n_i}>0
    \end{align}
    and the asymptotic relative symplectic distance of $\{\PP_i\}_{i=0}^\infty$ is 
    \begin{align}
        \delta=\liminf_{i\rightarrow \infty}\frac{d_i}{2n_i}=\frac{1}{2}\liminf_{i\rightarrow \infty}\frac{d_i}{n_i}>0.  
    \end{align}
    %Since $\{\C_i\}_{i=0}^{\infty}$ is asymptotically good, we have $r>0$ and $\delta>0$.  

    This proves the expected result. 
\end{proof}

%\begin{remark}\label{rem.sum so good}
%    With notations above, it follows from \cite{Huffman} that $d_{\rm H}(\PP(\C_1,\C_2))=\min\{2d_1,d_2\}$. 
%    Then a similar argument to the proof of Theorem \ref{th.Plotkin sum SO good} implies that $q$-ary symplectic 
%    SO codes obtained from Theorem \ref{th.Plotkin sso} are also asymptotically good in terms of the Hamming distance, 
%    provided that $q=2$ or $q$ is an odd prime power. 
%\end{remark}

\subsection{The first construction related to $\ell$-intersection pairs of linear codes}\label{sec.3.2 sso l-intersection}

Let $\C_i$ be an  $[n,k_i]_q$ linear code for $i=1,2$.
Then they are said to be an {\em $\ell$-intersection pair} if $\dim(\C_1\cap \C_2)=\ell$.
Here we recall some important results on $\ell$-intersection pairs of linear codes.

\begin{lemma}[Lemma 2.2 in \cite{GGJT2020l-intersection}]\label{lem.l-intersection111}
    Let $\C_i$ be an  $[n,k_i]_q$ linear code for $i=1,2$. Suppose that $\dim(\C_1 \cap \C_2)=\ell$.
    Then $\max\{k_1+k_2-n, 0\}\leq \ell \leq \min\{k_1, k_2\}.$
\end{lemma}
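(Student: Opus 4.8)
The plan is to derive both inequalities from the elementary dimension formula for the sum of two subspaces. The entire argument takes place inside the ambient space $\F_q^n$, and the only structural fact I need is the Grassmann identity: for the finite-dimensional subspaces $\C_1,\C_2$ one has $\dim(\C_1+\C_2)=\dim\C_1+\dim\C_2-\dim(\C_1\cap\C_2)=k_1+k_2-\ell$.

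For the upper bound, I would observe that $\C_1\cap\C_2$ is a subspace of both $\C_1$ and $\C_2$, so by monotonicity of dimension under inclusion its dimension cannot exceed either $k_1$ or $k_2$; hence $\ell\leq\min\{k_1,k_2\}$. This requires nothing beyond the containments $\C_1\cap\C_2\subseteq\C_1$ and $\C_1\cap\C_2\subseteq\C_2$.

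For the lower bound, I would use that $\C_1+\C_2$ is itself a subspace of $\F_q^n$, so $\dim(\C_1+\C_2)\leq n$. Substituting the dimension formula gives $k_1+k_2-\ell\leq n$, that is, $\ell\geq k_1+k_2-n$. Combining this with the trivial bound $\ell\geq 0$ (a dimension is never negative) yields $\ell\geq\max\{k_1+k_2-n,0\}$, which completes the argument.

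There is no genuine obstacle here: the result is a direct consequence of the Grassmann dimension identity together with the two containments $\C_1\cap\C_2\subseteq\C_i$ and $\C_1+\C_2\subseteq\F_q^n$. The only point that deserves a moment's care is remembering to include the nonnegativity bound $\ell\geq 0$ explicitly in the lower bound, since when $k_1+k_2<n$ the quantity $k_1+k_2-n$ is negative and the effective lower bound is then simply $0$.
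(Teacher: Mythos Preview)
Your proof is correct and entirely standard: the Grassmann dimension identity combined with the containments $\C_1\cap\C_2\subseteq\C_i$ and $\C_1+\C_2\subseteq\F_q^n$ is exactly the right approach. Note that the paper does not supply its own proof of this lemma---it is quoted from \cite{GGJT2020l-intersection} without argument---so there is nothing to compare against, but your derivation is the natural one and would be what any reader supplies mentally.
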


\begin{lemma}[Theorem 7 in \cite{HFF2022l-intesection222}]\label{lem.l-intersection}
    Let $q\geq 3$ be a prime power. Then there exist two Hamming MDS codes $\C_1$ and $\C_2$ with parameters
    $[n,k_1,n-k_1+1]_q^{\rm H}$ and $[n,k_2,n-k_2+1]_q^{\rm H}$ such that $\dim(\C_1\cap \C_2)=\ell$ if any one of the
    following conditions holds:
    \begin{itemize}
        \item [\rm (1)] $2\leq n\leq q+1$, $1\leq k_1,k_2\leq n-1$, $\max\{k_1+k_2-n,0\}\leq \ell\leq \min\{k_1,k_2\}$
        and $(n,k_1,k_2,\ell)\neq (q+1,2,1,1)$ or $(q+1,1,2,1);$

        \item [\rm (2)] $q=2^m\geq 4$, $0\leq \ell \leq 3$, $n=q+2$ and $(k_1,k_2)=(3,q-1)$, $(q-1,3)$ or $(3,3);$

        \item [\rm (3)] $q=2^m\geq 4$, $q-4\leq \ell \leq q-1$, $n=q+2$ and $(k_1,k_2)=(q-1,q-1)$.
    \end{itemize}
\end{lemma}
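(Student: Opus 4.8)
The plan is to realize every required pair by \emph{generalized Reed--Solomon} (GRS) codes in the length range $n\le q+1$, and by the exceptional even-characteristic MDS codes when $n=q+2$. A GRS code $\GRS_k(\ba,\bv)$ is MDS for free, and its dual is again a GRS code and hence again MDS, so the only quantity I must actively control is the intersection dimension $\ell$. Throughout I model a codeword of $\GRS_k(\ba,\bv)$ as $(v_1f(a_1),\dots,v_nf(a_n))$ with $\deg f<k$, where $\ba=(a_1,\dots,a_n)$ are distinct evaluation points (with the usual point-at-infinity convention when $n=q+1$); the MDS property is then immediate from nonsingularity of Vandermonde minors.

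For case (1) I would put both codes on the \emph{same} points, taking $\C_1=\GRS_{k_1}(\ba,\bv)$ and $\C_2=\GRS_{k_2}(\ba,\mathbf{w})$ with $\mathbf{w}_i=v_i\,r(a_i)$ for a fixed polynomial $r$ of degree $d$ having no root in $\F_q$ (so every $r(a_i)\ne0$ and $\C_2$ is genuinely a GRS code). A vector lies in $\C_1\cap\C_2$ exactly when $f(a_i)=r(a_i)g(a_i)$ for all $i$ with $\deg f<k_1$ and $\deg g<k_2$; as soon as $d+k_2-1<n$ this forces the polynomial identity $f=rg$, so the intersection is $\{\,rg:\deg g<k_2,\ \deg(rg)<k_1\,\}$, of dimension $\min\{k_2,\,k_1-d\}$. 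Assuming $k_1\ge k_2$ and setting $d=k_1-\ell$, the constraint $d+k_2-1<n$ reads $\ell\ge k_1+k_2-n$, so this construction attains precisely the window $\max\{k_1+k_2-n,0\}\le\ell\le\min\{k_1,k_2\}$ guaranteed by Lemma \ref{lem.l-intersection111}. The one place the recipe degenerates is when $d=1$ at full length $n=q+1$: a degree-one $r$ always has an $\F_q$-root, and there is then no spare point to avoid it; this is exactly the tuple $(q+1,2,1,1)$ (and its mirror), which is why it must be excluded. The remaining full-length instances are recovered through the extended-GRS bookkeeping of the infinity column.

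The genuinely hard part is cases (2) and (3), where $n=q+2>q+1$ and $q=2^m$: ordinary GRS codes no longer reach this length, and the interior MDS dimensions at length $q+2$ are $k\in\{3,q-1\}$, available only in even characteristic, arising from hyperovals in $\mathrm{PG}(2,q)$ (the $[q+2,q-1]$ code being the dual of the $[q+2,3]$ one). Here I would write these exceptional codes and their duals with explicit generator matrices stacked as $\bigl(\begin{smallmatrix}A\\B\end{smallmatrix}\bigr)$, let the rows of $A$ span the intended common subcode, and read off $\dim(\C_1\cap\C_2)$ from the rank of the combined system. I expect the main obstacle to be verifying, \emph{simultaneously}, that each length-$(q+2)$ code stays MDS and that the two codes meet in exactly the prescribed $\ell$: with no Vandermonde shortcut available the MDS test and the intersection count must both be carried out from the underlying hyperoval geometry, and it is precisely this rigidity that confines the admissible $\ell$ to the narrow ranges $\{0,\dots,3\}$ in (2) and $\{q-4,\dots,q-1\}$ in (3)---the windows predicted by Lemma \ref{lem.l-intersection111} for $(k_1,k_2)\in\{(3,3),(3,q-1),(q-1,3)\}$ and $(q-1,q-1)$ respectively.
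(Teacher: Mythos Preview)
The paper does not prove this lemma; it is quoted as Theorem~7 of \cite{HFF2022l-intesection222} and invoked as a black box, so there is no argument in the present paper against which to compare your sketch.

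Taken on its own, your GRS-with-twist construction for case~(1) is the right template, but the treatment of the boundary $n=q+1$ is not accurate. You attribute the breakdown to the non-existence of a root-free degree-one polynomial and assert this ``is exactly the tuple $(q+1,2,1,1)$''. With your recipe $d=k_1-\ell$ (under $k_1\ge k_2$), however, $d=1$ is forced for every tuple $(q+1,k,k,k-1)$, none of which the lemma excludes, so the construction as written misses all of those as well. Conversely, $(q+1,2,1,1)$ has $\ell=k_2$, and for $\ell=k_2$ your own formula $\min\{k_2,k_1-d\}$ permits $d=0$; the genuine obstruction there is not the degree-one issue but the fact that the extended-GRS infinity column destroys the nesting $\GRS_1\subset\GRS_2$---equivalently, a $[q+1,2]$ MDS code cannot contain a full-weight codeword, since after normalizing such a word to $\mathbf{1}$ the second generator row would need $q+1$ pairwise distinct entries from $\F_q$. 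The sentence deferring everything to ``extended-GRS bookkeeping of the infinity column'' is precisely where the missing case analysis lives. For parts~(2) and~(3) you have correctly named the hyperoval-based $[q+2,3]$ and $[q+2,q-1]$ codes but have only stated an intention rather than an argument.
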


Next, we apply the Plotkin sum construction to $\ell$-intersection pairs of linear codes. 
By Theorem \ref{th.Plotkin sso}, we can transform $\ell$-intersection pairs of linear codes 
to symplectic SO codes, which is not effectively applicable to the Euclidean and Hermitian inner 
products under the Hamming distance. 
In addition, these constructions have explicit parameters.

\begin{theorem}\label{th.symplectic SO.l-intersection}
    Let $q\geq 3$ be a prime power. Let $n$, $k_1$ and $k_2$ be any three positive integers satisfying
    $2\leq n\leq q+1$, $1\leq k_1\leq k_2\leq n-1$ and $k_1+k_2\geq n$.
    Then the following statements hold.
    \begin{enumerate}
        \item [\rm (1)]  There exists a symplectic SO $[2n,n+k_1-k_2,n-k_1+1]_q^{\rm s}$ code.

        \item [\rm (2)] There exists a symplectic DC $[2n,n+k_2-k_1,n-k_2+1]_q^{\rm s}$ code.
    \end{enumerate}
\end{theorem}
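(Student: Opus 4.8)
The plan is to feed a suitable pair of MDS codes supplied by Lemma \ref{lem.l-intersection} into the Plotkin sum criterion of Theorem \ref{th.Plotkin sso}. The crucial observation is that the symplectic SO condition in Theorem \ref{th.Plotkin sso}, namely $\C_1\subseteq \C_2^{\perp_{\rm E}}$, becomes an ordinary containment of codes once one replaces the second ingredient by a Euclidean dual: if I build the Plotkin sum from a code $\mathcal{A}$ of dimension $k_1$ and from $\mathcal{B}=\C_2^{\perp_{\rm E}}$, then the requirement $\mathcal{A}\subseteq \mathcal{B}^{\perp_{\rm E}}=\C_2$ is exactly the statement that $\mathcal{A}$ sits inside $\C_2$, i.e. that $\dim(\mathcal{A}\cap\C_2)$ is maximal. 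This is precisely the $\ell=k_1$ instance of the $\ell$-intersection machinery, and it is why the hypothesis $k_1\le k_2$ appears.

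Concretely, I would first invoke Lemma \ref{lem.l-intersection}(1) with $\ell=k_1$ to obtain MDS codes $\C_1=[n,k_1,n-k_1+1]_q^{\rm H}$ and $\C_2=[n,k_2,n-k_2+1]_q^{\rm H}$ with $\dim(\C_1\cap\C_2)=k_1$, hence $\C_1\subseteq\C_2$. Before applying the lemma I must check its hypotheses: $2\le n\le q+1$ and $1\le k_1,k_2\le n-1$ hold by assumption, and $\max\{k_1+k_2-n,0\}\le k_1\le\min\{k_1,k_2\}=k_1$ holds since $k_2\le n$ and $k_1\le k_2$. The only delicate point is to rule out the forbidden tuples $(q+1,2,1,1)$ and $(q+1,1,2,1)$; the first is incompatible with $k_1\le k_2$, while the second would force $k_1+k_2=3\ge n=q+1$, i.e. $q\le 2$, contradicting $q\ge3$. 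Then I form $\PP(\C_1,\C_2^{\perp_{\rm E}})$. Since $\C_1\subseteq\C_2=(\C_2^{\perp_{\rm E}})^{\perp_{\rm E}}$, Theorem \ref{th.Plotkin sso} guarantees that this code is symplectic SO, of length $2n$ and dimension $k_1+(n-k_2)=n+k_1-k_2$.

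It remains to read off the minimum distances, and here the hypothesis $k_1+k_2\ge n$ does the work. For part (1), Theorem \ref{th.Plotkin sso} gives symplectic distance $\min\{d_{\rm H}(\C_1),d_{\rm H}(\C_2^{\perp_{\rm E}})\}=\min\{n-k_1+1,k_2+1\}$, which equals $n-k_1+1$ exactly because $n-k_1\le k_2$. For part (2), I take the symplectic dual $\PP(\C_1,\C_2^{\perp_{\rm E}})^{\perp_{\rm s}}$; by Theorem \ref{th.Plotkin sso}(2) it is symplectic DC of dimension $2n-(n+k_1-k_2)=n+k_2-k_1$ and symplectic distance $\min\{d_{\rm H}(\C_1^{\perp_{\rm E}}),d_{\rm H}(\C_2)\}=\min\{k_1+1,n-k_2+1\}=n-k_2+1$, again using $n-k_2\le k_1$.

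The main obstacle I anticipate is not the Plotkin sum bookkeeping, which is routine given Theorem \ref{th.Plotkin sso}, but the careful verification that Lemma \ref{lem.l-intersection} actually applies at $\ell=k_1$. In particular, the hypothesis $k_1+k_2\ge n$ plays a double role that must be exhibited cleanly: it simultaneously excludes the exceptional tuple $(q+1,1,2,1)$ (together with $q\ge3$) and forces both minima above to select the intended values $n-k_1+1$ and $n-k_2+1$. Checking that no other degenerate configuration slips through the conditions of Lemma \ref{lem.l-intersection} is the part that warrants the most attention.
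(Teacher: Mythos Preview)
Your proposal is correct and follows essentially the same route as the paper: apply Lemma~\ref{lem.l-intersection}(1) with $\ell=k_1$ to obtain MDS codes $\C_1\subseteq\C_2$, form $\PP(\C_1,\C_2^{\perp_{\rm E}})$, and use Theorem~\ref{th.Plotkin sso} together with the hypothesis $k_1+k_2\ge n$ to pin down the minima. Your handling of the two exceptional tuples is in fact slightly cleaner than the paper's, which defers that check to the end after a case split on $k_1+k_2\gtrless n$.
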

\begin{proof}
    For each integer $2\leq n\leq q+1$ and $1\leq k_1\leq k_2\leq n-1$ satisfying $(n,k_1,k_2)\neq (q+1,2,1)$ or $(q+1,1,2)$,
    it follows from Lemma \ref{lem.l-intersection} (1) that there exist two Hamming MDS codes,
    denoted by $\C_1$ and $\C_2$, with respective parameters $[n,k_1,n-k_1+1]_q^{\rm H}$ and $[n,k_2,n-k_2+1]_q^{\rm H}$
    such that $\dim(\C_1\cap \C_2)=k_1$.
    Then $\C_2^{\perp_{\rm E}}$ is a Hamming MDS $[n,n-k_2,k_2+1]_q^{\rm H}$ code and $\C_1\subseteq \C_2=(\C_2^{\perp_{\rm E}})^{\perp_{\rm E}}$.

    Hence, $\PP(\C_1,\C_2^{\perp_{\rm E}})$  is a symplectic SO $[2n,n+k_1-k_2,\min\{n-k_1+1,k_2+1\}]_q^{\rm s}$ code and $\PP(\C_1,\C_2^{\perp_{\rm E}})^{\perp_{\rm s}}$ is a symplectic DC $[2n,n+k_2-k_1,\min\{k_1+1,n-k_2+1\}]_q^{\rm s}$ code from Theorem \ref{th.Plotkin sso}.
    Since $d_{\rm s}(\PP(\C_1,\C_2^{\perp_{\rm E}}))=\min\{n-k_1+1,k_2+1\}$ and
    $d_{\rm s}(\PP(\C_1,\C_2^{\perp_{\rm E}})^{\perp_{\rm s}})=\min\{k_1+1,n-k_2+1\}$,
    we have the following two cases.
    \begin{itemize}
        \item \textbf{Case 1:} If $k_1+k_2\geq n$, we have $d_{\rm s}(\PP(\C_1,\C_2^{\perp_{\rm E}}))=n-k_1+1$ and
        $d_{\rm s}(\PP(\C_1,\C_2^{\perp_{\rm E}})^{\perp_{\rm s}})=n-k_2+1$, $i.e.$,
        a symplectic SO $[2n,n+k_1-k_2,n-k_1+1]_q^{\rm s}$ code and
        a symplectic DC $[2n,n+k_2-k_1,n-k_2+1]_q^{\rm s}$ code exist.

        \item \textbf{Case 2:} If $k_1+k_2\leq n$, we have $d_{\rm s}(\PP(\C_1,\C_2^{\perp_{\rm E}}))=k_2+1$ and
        $d_{\rm s}(\PP(\C_1,\C_2^{\perp_{\rm E}})^{\perp_{\rm s}})=k_1+1$, $i.e.$,
        a symplectic SO $[2n,n+k_1-k_2,k_2+1]_q^{\rm s}$ code and
        a symplectic DC $[2n,n+k_2-k_1,k_1+1]_q^{\rm s}$ code exist.
    \end{itemize}
Due to the flexibility of the ranges of $k_1$ and $k_2$, it is not difficult to check that \textbf{Case 1} and \textbf{Case 2} 
will yield the same families of symplectic SO (resp. DC) codes. Hence, we only consider $\textbf{Case 1}$ for our construction. 
Moreover, since $1\leq k_1\leq k_2\leq n-1$ and $k_1+k_2\geq n$ are limited in $\textbf{Case 1}$, 
then $(n,k_1,k_2)$ are always not equal to $(q+1,2,1)$ or $(q+1,1,2)$. 
Therefore, the desired results (1) and (2) follow.
\end{proof}

\begin{theorem}\label{th.symplectic SO and DC MDS codes}
    Let $q\geq 3$ be a prime power. Let $2\leq n\leq q+1$ be a positive integer.
    Then the following statements hold.
    \begin{enumerate}
        \item [\rm (1)] There exists a symplectic MDS symplectic SO $[2n,2k,n-k+1]_q^{\rm s}$ code
        and a symplectic MDS symplectic DC $[2n,2n-2k,k+1]_q^{\rm s}$ code for each integer $1\leq k\leq \left \lfloor \frac{n}{2} \right \rfloor$.

        \item [\rm (2)] There exists a symplectic MDS symplectic SO $[2n,2k+1,n-k]_q^{\rm s}$ code
        and a symplectic MDS symplectic DC $[2n,2n-2k-1,k+1]_q^{\rm s}$ code for each integer $0\leq k\leq \left \lfloor \frac{n-1}{2} \right \rfloor$.

        \item [\rm (3)] There exists a symplectic MDS symplectic SO $[2^{m+1}+4,6,2^m]_{2^m}^{\rm s}$ code and a symplectic MDS symplectic DC $[2^{m+1}+4,2^{m+1}-2,4]_{2^m}^{\rm s}$ code for each $m\geq 2$.
    \end{enumerate}
\end{theorem}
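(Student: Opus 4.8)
The plan is to obtain all three parts as direct specializations of Theorem \ref{th.symplectic SO.l-intersection}, choosing the free parameters $k_1,k_2$ so that the guaranteed symplectic SO code and its symplectic DC dual carry exactly the claimed dimension and distance, and then to certify the MDS property by comparing against the symplectic Singleton bound $d_{\rm s}\leq\lfloor\frac{2n-k+2}{2}\rfloor$ recalled in Section \ref{sec2.1}. Recall that for $2\le n\le q+1$, $1\le k_1\le k_2\le n-1$ and $k_1+k_2\ge n$, Theorem \ref{th.symplectic SO.l-intersection} yields a symplectic SO $[2n,n+k_1-k_2,n-k_1+1]_q^{\rm s}$ code together with its symplectic DC dual $[2n,n+k_2-k_1,n-k_2+1]_q^{\rm s}$.

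For part (1) I would set $k_1=k$ and $k_2=n-k$. Then $k_1+k_2=n$, we have $1\le k_1$ and $k_2\le n-1$, and $k_1\le k_2$ is equivalent to $k\le\lfloor n/2\rfloor$, so the hypotheses hold for $1\le k\le\lfloor n/2\rfloor$; the resulting pair has parameters $[2n,2k,n-k+1]_q^{\rm s}$ and $[2n,2n-2k,k+1]_q^{\rm s}$. For part (2) with $k\ge 1$ I would instead take $k_1=k+1$ and $k_2=n-k$, so that $k_1+k_2=n+1\ge n$, the condition $k_2\le n-1$ forces $k\ge 1$, and $k_1\le k_2$ is equivalent to $k\le\lfloor(n-1)/2\rfloor$, producing the symplectic SO $[2n,2k+1,n-k]_q^{\rm s}$ code and its DC dual $[2n,2n-2k-1,k+1]_q^{\rm s}$. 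In every case a direct substitution into the symplectic Singleton bound shows equality is attained, so each code is symplectic MDS.

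The boundary case $k=0$ of part (2) falls outside Theorem \ref{th.symplectic SO.l-intersection}, since reaching dimension $2k+1=1$ would force $k_2-k_1=n-1$, which is impossible under $1\le k_1\le k_2\le n-1$; I would therefore handle it by hand. Because $\langle\bx,\bx\rangle_{\rm s}=0$ for every $\bx$, any one-dimensional code is automatically symplectic SO, and the code generated by the all-one vector $\mathbf{1}\in\F_q^{2n}$ has symplectic weight $n$, giving a symplectic SO $[2n,1,n]_q^{\rm s}$ code whose symplectic dual is a symplectic DC $[2n,2n-1,1]_q^{\rm s}$ code; both meet the symplectic Singleton bound. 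For part (3) I would invoke Lemma \ref{lem.l-intersection}(2) with $q=2^m\ge 4$, $n=q+2$, $(k_1,k_2)=(3,q-1)$ and intersection dimension $\ell=3=k_1$, which supplies two Hamming MDS codes with $\C_1\subseteq\C_2$; feeding these into the construction of Theorem \ref{th.symplectic SO.l-intersection} (equivalently, forming $\PP(\C_1,\C_2^{\perp_{\rm E}})$) gives dimension $n+k_1-k_2=6$ and distance $\min\{n-k_1+1,k_2+1\}=q=2^m$, i.e.\ the symplectic SO $[2^{m+1}+4,6,2^m]_{2^m}^{\rm s}$ code and its DC dual $[2^{m+1}+4,2^{m+1}-2,4]_{2^m}^{\rm s}$.

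All the arithmetic is routine; the only genuine care is in checking that each chosen $(k_1,k_2)$ really satisfies the full set of inequalities of Theorem \ref{th.symplectic SO.l-intersection}, and in spotting and separately dispatching the degenerate $k=0$ case of part (2), which the $\ell$-intersection machinery cannot reach. I therefore expect the main obstacle to be purely bookkeeping: keeping the floor functions, the dimension formulas, and the two-sided Singleton comparisons consistent across the three parts.
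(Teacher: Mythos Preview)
Your approach is essentially the same as the paper's: specialize Theorem \ref{th.symplectic SO.l-intersection} with $k_1+k_2=n$ for part (1), with $k_1+k_2=n+1$ for part (2), and invoke Lemma \ref{lem.l-intersection}(2) with $(k_1,k_2,\ell)=(3,2^m-1,3)$ for part (3), then verify the symplectic Singleton bound is met. One difference worth noting: the paper simply takes $1\le k_1\le\lfloor(n+1)/2\rfloor$ with $k_2=n+1-k_1$ and then sets $k=k_1-1$, thereby claiming to cover $k=0$; however, at $k_1=1$ this forces $k_2=n$, which lies outside the hypothesis $k_2\le n-1$ of Theorem \ref{th.symplectic SO.l-intersection}. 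You correctly spotted this boundary issue and supplied the direct one-dimensional construction via the all-ones vector, so your version is in fact slightly more careful than the paper's at this point.
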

\begin{proof}
    (1) Let $k_1+k_2=n$ and $1\leq k_1\leq \left \lfloor \frac{n}{2} \right \rfloor$ in Theorem \ref{th.symplectic SO.l-intersection}.
    Then we can obtain a symplectic SO $[2n,2k_1,n-k_1+1]_q^{\rm s}$ code and a symplectic DC $[2n,2n-2k_1,k_1+1]_q^{\rm s}$ code.
    It follows from the facts $n-k_1+1=\left \lfloor \frac{2n-2k_1+2}{2} \right \rfloor$
    and $k_1+1=\left \lfloor \frac{2n-(2n-2k_1)+2}{2} \right \rfloor$ that these two codes are symplectic MDS.
    %Moreover, if $(n,k_1,k_2)=(q+1,1,2)$, then $q+1=3$, which contradicts to $q\geq 3$.
    This completes the proof of (1) by taking $k=k_1$.

    (2) Let $k_1+k_2=n+1>n$ and $1\leq k_1\leq \left \lfloor \frac{n+1}{2} \right \rfloor$ in Theorem \ref{th.symplectic SO.l-intersection}.
    Then we can obtain a symplectic SO $[2n,2k_1-1,n-k_1+1]_q^{\rm s}$ code and a symplectic DC $[2n,2n-2k_1+1,k_1]_q^{\rm s}$ code.
    It follows from the facts $n-k_1+1=\left \lfloor \frac{2n-(2k_1-1)+2}{2} \right \rfloor$
    and $k_1=\left \lfloor \frac{2n-(2n-2k_1+1)+2}{2} \right \rfloor$
    that these two codes are symplectic MDS. This completes the proof of (2) by taking $k=k_1-1$.

    (3) Similar to the proof of Theorem \ref{th.symplectic SO.l-intersection}, for each $m\geq 2$,
    it follows from taking $(k_1,k_2,\ell)=(3,2^m-1,3)$ in Lemma \ref{lem.l-intersection} (2) and
    Theorem \ref{th.Plotkin sso} that a symplectic SO $[2^{m+1}+4,6,2^m]_{2^m}$ code and
    a symplectic DC $[2^{m+1}+4,2^{m+1}-2,4]_{2^m}$ code exist.
    Similar to (1) above, it is easy to check that these two codes are symplectic MDS.
    This completes the desired result (3).
\end{proof}

\begin{theorem}\label{th.symplectic self-dual}
    There exists a symplectic self-dual $[2^{m+1}+4,2^m+2,4]_{2^m}^{\rm s}$ code for each integer $m\geq 2$.
\end{theorem}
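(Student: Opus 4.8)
The plan is to realize the desired self-dual code as a Plotkin sum of a code with its own Euclidean dual, so that the symplectic self-orthogonality criterion of Theorem~\ref{th.Plotkin sso} holds automatically while the dimension is forced to be exactly half the length. Concretely, for $\PP(\C_1,\C_2)$ to be symplectic self-dual I need it to be symplectic SO and to have dimension $k_1+k_2$ equal to half of its length $2n$, i.e. $k_1+k_2=n$. Combining $k_1+k_2=n$ with the self-orthogonality criterion $\C_1\subseteq \C_2^{\perp_{\rm E}}$ from Theorem~\ref{th.Plotkin sso}(3) forces $\dim\C_1=\dim\C_2^{\perp_{\rm E}}$ (both equal $k_1=n-k_2$), and hence $\C_1=\C_2^{\perp_{\rm E}}$, equivalently $\C_2=\C_1^{\perp_{\rm E}}$. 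So the natural move is to fix a single code $\C_1$ and take $\C_2=\C_1^{\perp_{\rm E}}$, for which the inclusion $\C_1\subseteq(\C_1^{\perp_{\rm E}})^{\perp_{\rm E}}=\C_1$ holds trivially.

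First I would choose $\C_1$ to be a Hamming MDS $[2^m+2,3,2^m]_{2^m}^{\rm H}$ code. Writing $q=2^m$, such a length-$(q+2)$ MDS code of dimension $3$ exists precisely because $q$ is even; within the present paper its existence is already guaranteed by Lemma~\ref{lem.l-intersection}(2), taking $(k_1,k_2)=(3,3)$, which produces $[q+2,3,q]_q^{\rm H}$ MDS codes. Its Euclidean dual $\C_1^{\perp_{\rm E}}$ is then a Hamming MDS $[q+2,q-1,4]_q^{\rm H}$ code, since the Euclidean dual of an MDS code is MDS.

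Next I would apply Theorem~\ref{th.Plotkin sso} to the pair $(\C_1,\C_1^{\perp_{\rm E}})$. Because $\C_1\subseteq(\C_1^{\perp_{\rm E}})^{\perp_{\rm E}}$ holds (with equality), statement~(1) of that theorem gives that $\PP(\C_1,\C_1^{\perp_{\rm E}})$ is symplectic SO with length $2(q+2)=2^{m+1}+4$, dimension $3+(q-1)=q+2=2^m+2$, and minimum symplectic distance $\min\{d_{\rm H}(\C_1),d_{\rm H}(\C_1^{\perp_{\rm E}})\}=\min\{2^m,4\}=4$ for every $m\geq 2$. Since the dimension $2^m+2$ equals exactly half the length $2^{m+1}+4$, symplectic self-orthogonality upgrades to symplectic self-duality, namely $\PP(\C_1,\C_1^{\perp_{\rm E}})=\PP(\C_1,\C_1^{\perp_{\rm E}})^{\perp_{\rm s}}$, which yields the claimed $[2^{m+1}+4,2^m+2,4]_{2^m}^{\rm s}$ code.

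The only genuinely nontrivial ingredient is the existence of the length-$(q+2)$ MDS code, which is exactly where the evenness of $q$ (hyperovals in $\mathrm{PG}(2,q)$) enters and why the statement is restricted to $q=2^m$; I would invoke Lemma~\ref{lem.l-intersection} for this rather than re-proving it. The remaining points are routine: that the dual of an MDS code is MDS, and that the distance equals $4$ in all cases (for $m=2$ both constituent distances coincide at $4$, while for $m\geq 3$ the dual's distance $4$ is the binding one in $\min\{2^m,4\}$). I would also remark that this construction parallels Theorem~\ref{th.symplectic SO and DC MDS codes}(3); the difference is that pairing $\C_1$ with its own Euclidean dual, rather than with $\C_2^{\perp_{\rm E}}$ for an independent $\C_2$, is precisely what pins the dimension at half the length and produces self-duality.
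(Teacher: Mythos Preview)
Your proposal is correct and is essentially the same argument as the paper's: the paper invokes Lemma~\ref{lem.l-intersection}(2) with $(k_1,k_2,\ell)=(3,3,3)$, which forces $\C_1=\C_2$ and hence yields exactly your pair $(\C_1,\C_1^{\perp_{\rm E}})$ in the Plotkin sum. Your presentation is a bit more direct in that you bypass the $\ell$-intersection language and go straight to pairing a single $[q+2,3,q]_q^{\rm H}$ MDS code with its Euclidean dual, but the content is identical.
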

\begin{proof}
Similar to the proof of Theorem \ref{th.symplectic SO.l-intersection} again, the desired result follows by taking $(k_1,k_2,\ell)=(3,3,3)$ in Lemma \ref{lem.l-intersection} (2) or by taking $(k_1,k_2,\ell)=(2^m-1,2^m-1,2^m-1)$ in Lemma \ref{lem.l-intersection} (3).
\end{proof}

In particular, symplectic MDS symplectic self-dual codes can be further deduced from
Theorems \ref{th.symplectic SO and DC MDS codes} and \ref{th.symplectic self-dual} as follows.

\begin{corollary}\label{coro.symplectic self-dual codes}
    Let $q\geq 3$ be a prime power. %Let $n$ be an integer satisfying $2\leq n\leq q+1$.
    Then the following statements hold.
    \begin{enumerate}
        \item [\rm (1)] There exists a symplectic MDS symplectic self-dual $[2n,n,\frac{n}{2}+1]_q^{\rm s}$ code for each even $2\leq n\leq q+1$.

        \item [\rm (2)] There exists a symplectic MDS symplectic self-dual $[2n,n,\frac{n+1}{2}]_q^{\rm s}$ code for each odd $2\leq n\leq q+1$.

        \item [\rm (3)] There exists a symplectic MDS symplectic self-dual $[12,6,4]_4^{\rm s}$ code.
    \end{enumerate}
\end{corollary}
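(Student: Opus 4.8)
The plan is to specialize the dimension parameter in Theorems \ref{th.symplectic SO and DC MDS codes} and \ref{th.symplectic self-dual} so that the resulting symplectic SO code has dimension exactly half its length, which automatically upgrades it to a symplectic self-dual code. The one observation I would record at the outset is the following: if $\C$ is a symplectic SO $[2n,n]_q^{\rm s}$ code, then $\C\subseteq \C^{\perp_{\rm s}}$ together with $\dim \C^{\perp_{\rm s}}=2n-n=n=\dim \C$ forces $\C=\C^{\perp_{\rm s}}$, i.e., $\C$ is symplectic self-dual. With this in hand, each item reduces to choosing parameters so that the dimension lands exactly on $n$.

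For item (1), with $n$ even I would invoke Theorem \ref{th.symplectic SO and DC MDS codes}(1), which furnishes a symplectic MDS symplectic SO $[2n,2k,n-k+1]_q^{\rm s}$ code for every $1\leq k\leq \lfloor n/2\rfloor$. Taking $k=n/2$ (legitimate since $n/2=\lfloor n/2\rfloor$ for even $n$, and $n/2\geq 1$ for $n\geq 2$) yields a $[2n,n,\frac{n}{2}+1]_q^{\rm s}$ code that is symplectic MDS; by the opening observation it is self-dual, proving (1). For item (2), with $n$ odd I would instead use Theorem \ref{th.symplectic SO and DC MDS codes}(2), which gives a symplectic MDS symplectic SO $[2n,2k+1,n-k]_q^{\rm s}$ code for each $0\leq k\leq \lfloor (n-1)/2\rfloor$. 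Choosing $k=(n-1)/2$ (valid for odd $n$, where $(n-1)/2=\lfloor (n-1)/2\rfloor$) produces a $[2n,n,\frac{n+1}{2}]_q^{\rm s}$ code which is again symplectic MDS and, having dimension $n$, symplectic self-dual; this is exactly (2).

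For item (3) the length $n=6$ exceeds $q+1=5$ when $q=4$, so it lies outside the range covered by (1); instead I would read it off from Theorem \ref{th.symplectic self-dual}. Setting $m=2$ there yields a symplectic self-dual $[2^{3}+4,\,2^{2}+2,\,4]_{4}^{\rm s}=[12,6,4]_4^{\rm s}$ code, and it only remains to confirm that it meets the symplectic Singleton bound: for a $[12,6]_4^{\rm s}$ code this bound is $\lfloor (12-6+2)/2\rfloor=4$, which the code attains, so it is symplectic MDS.

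I expect no serious obstacle here, as the whole argument is parameter bookkeeping once the ``SO plus half-dimension implies self-dual'' observation is in place. The only points requiring a moment of care are verifying in each case that the specialized index $k$ actually lies in the range permitted by the cited theorem (so that the codes genuinely exist), and checking the symplectic MDS property separately in (3), since Theorem \ref{th.symplectic self-dual} asserts only self-duality and not MDS-ness, whereas in (1) and (2) the MDS property is inherited directly from Theorem \ref{th.symplectic SO and DC MDS codes}.
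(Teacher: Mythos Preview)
Your proposal is correct and follows essentially the same route as the paper: specialize $k=\tfrac{n}{2}$ (even $n$) and $k=\tfrac{n-1}{2}$ (odd $n$) in Theorem~\ref{th.symplectic SO and DC MDS codes}(1),(2), and take $m=2$ for part~(3). The paper additionally notes that (3) can alternatively be read off from Theorem~\ref{th.symplectic SO and DC MDS codes}(3) with $m=2$, which already delivers the MDS property directly; your choice to cite Theorem~\ref{th.symplectic self-dual} and then verify the Singleton bound by hand is equally valid.
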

\begin{proof}
    Taking $k=\frac{n}{2}$ with even $n$ and $\frac{n-1}{2}$ with odd $n$ in Theorems \ref{th.symplectic SO and DC MDS codes} (1) and (2), respectively,
    the desired results (1) and (2) follow immediately. Also the desired result (3) holds by taking $m=2$ in Theorem \ref{th.symplectic SO and DC MDS codes} (3)
    or Theorem \ref{th.symplectic self-dual}.
\end{proof}

\begin{remark}
As mentioned earlier, this construction is not effectively applicable to the Euclidean and Hermitian inner products 
under the Hamming distance based on $\ell$-intersection pairs of linear codes. 
Currently, almost all known Hamming MDS Euclidean and Hermitian SO codes are 
constructed from (extended) generalized Reed-Solomon codes. 
These known Hamming MDS SO codes can yield additive (in fact, linear) SO codes  
of generalized Reed-Solomon type by \cite[Lemma 18]{quantum-codes-IT-2}.
It is not difficult to check that symplectic SO codes $\C$ we obtain from $\ell$-intersection pairs of linear codes 
are not equivalent to known Hamming MDS Hermitian SO codes, 
even if they have the same parameters because our method is more universal and $\phi(\C)$ is nonlinear additive. 
In fact, Adriaensen and Ball \cite{Add-MDS} also motivated us to construct additive MDS codes with length less than or equal to $q+1$.
\end{remark}

\begin{example}
    Let $q=8$. By Part (1) of Theorem \ref{th.symplectic SO and DC MDS codes}, we can obtain symplectic MDS symplectic SO
    $[6,2,3]_8^{\rm s}$, $[8,2,4]_8^{\rm s}$, $[10,2,5]_8^{\rm s}$, $[10,4,4]_8^{\rm s}$, $[12,2,6]_8^{\rm s}$, $[12,4,5]_8^{\rm s}$,
    $[14,2,7]_8^{\rm s}$, $[14,4,6]_8^{\rm s}$, $[14,6,5]_8^{\rm s}$, $[16,2,8]_8^{\rm s}$, $[16,4,7]_8^{\rm s}$, $[16,6,6]_8^{\rm s}$,
    $[18,2,9]_8^{\rm s}$, $[18,4,8]_8^{\rm s}$, $[18,6,7]_8^{\rm s}$, $[18,8,6]_8^{\rm s}$ and $[20,6,8]_8^{\rm s}$ codes.
    By Part (2) of Theorem \ref{th.symplectic SO and DC MDS codes}, we can obtain symplectic MDS symplectic SO $[8,3,3]_8^{\rm s}$, $[10,3,4]_8^{\rm s}$, $[12,3,5]_8^{\rm s}$, $[12,5,4]_8^{\rm s}$, $[14,3,6]_8^{\rm s}$, $[14,5,5]_8^{\rm s}$,
    $[16,3,7]_8^{\rm s}$, $[16,5,6]_8^{\rm s}$, $[16,7,5]_8^{\rm s}$, $[18,3,8]_8^{\rm s}$, $[18,5,7]_8^{\rm s}$ and $[18,7,6]_8^{\rm s}$ codes.
    By Part (1) of Corollary \ref{coro.symplectic self-dual codes}, we can obtain
    symplectic MDS symplectic self-dual $[4,2,2]_8^{\rm s}$, $[8,4,3]_8^{\rm s}$, $[12,6,4]_8^{\rm s}$ and $[16,8,5]_8^{\rm s}$ codes. 
    By Part (2) of Corollary \ref{coro.symplectic self-dual codes}, we can obtain
    symplectic MDS symplectic self-dual $[6,3,2]_8^{\rm s}$, $[10,5,3]_8^{\rm s}$, $[14,7,4]_8^{\rm s}$ and $[18,9,5]_8^{\rm s}$ codes.
    In addition, Theorem \ref{th.symplectic self-dual} also yields a symplectic self-dual $[20,10,4]_8^{\rm s}$ code.
\end{example}

\subsection{The second construction related to generalized Reed-Muller codes}\label{sec.3.3 sso GRM}

We recall some notions on {\em generalized Reed-Muller} (GRM) codes and refer to \cite{Huffman} for more details. Let $R=\F_q[x_1,x_2,\ldots,x_m]$ be the ring of polynomials over $\F_q$
and $I=\langle  x_1^q-x_1,x_2^q-x_2,\ldots,x_m^q-x_m \rangle$ be the ideal of $R$.
Denote the corresponding $\F_q$-algebra by $AL=\F_q[x_1,x_2,\ldots,x_m]/I$ and the set of
zeros in $\F_q$ of $I$ by $O(I)=\F_q^m=\{P_1,P_2,\ldots,P_n\}$.
%Let $ev$ denote the evaluation function $ev(f)=(f(P_1),f(P_2),\ldots,f(P_n))$.
Then for any integer $r\geq 0$, the $r$-th order GRM code $\GRM(r,m)$ is defined as
\begin{align}\label{eq.GRMdef}
    \GRM(r,m)=\{(f(P_1),f(P_2),\ldots,f(P_n))\mid f\in AL, \deg(f)\leq r\}
\end{align}
and we always select a canonical representative of $f$ excluding power $x^j_i$ for $j\geq q$.
Then some known results of GRM codes from \cite{Huffman} are shown as follows.

\begin{lemma}[\cite{Huffman}]\label{lem.GRM properties}
    Let notations be the same as above. Suppose $0\leq r< m(q-1)$ and write
    $m(q-1)-r=a(q-1)+b$ with integers $a,b\geq 0$ and $b<q-1$. Then the following statements hold.
    \begin{enumerate}
        \item [\rm (1)] $\GRM(r,m)$ has parameters
        \begin{align}
            \left[q^m, \sum_{j=0}^{m}(-1)^j\binom{m}{j}\binom{m+r-jq}{r-jq}, (b+1)q^a\right]_q.
        \end{align}

        \item [\rm (2)] The Euclidean dual code of a GRM code is still a GRM code.
        Specifically, we have
        \begin{align}
            \GRM(r,m)^{\perp_{\rm E}}=\GRM(m(q-1)-r-1,m).
        \end{align}
        %$\GRM(r,m)^{\perp_{\rm E}}=\GRM(m(q-1)-r-1,m)$.

        \item [\rm (3)] For any integers $0\leq i\leq r< m(q-1)$, we have
        \begin{align}
            \GRM(i,m)\subseteq \GRM(r,m).
        \end{align}
        %$\GRM(i,m)\subseteq \GRM(r,m)$.
    \end{enumerate}
\end{lemma}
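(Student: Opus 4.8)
The plan is to treat the three parts in increasing order of difficulty, disposing of (3) and the structural halves of (1) and (2) by elementary monomial and character-sum arguments, and isolating the minimum-distance assertion as the genuine obstacle. For part (3), I would argue directly from the definition in \eqref{eq.GRMdef}: any canonical representative $f$ with $\deg(f)\le i$ also satisfies $\deg(f)\le r$, so its evaluation vector lies in $\GRM(r,m)$, giving $\GRM(i,m)\subseteq\GRM(r,m)$. For the length in (1), one has $|O(I)|=|\F_q^m|=q^m$. For the dimension, the evaluation map restricted to reduced polynomials (each variable degree $<q$) is injective, since a reduced polynomial vanishing on all of $\F_q^m$ is zero; hence $\dim\GRM(r,m)$ equals the number of canonical monomials $x_1^{e_1}\cdots x_m^{e_m}$ with $0\le e_i\le q-1$ and $\sum_i e_i\le r$. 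I would count these by inclusion--exclusion on the constraints $e_i\le q-1$: the number of monomials of degree $\le r$ with no upper bound is $\binom{m+r}{r}$, and forcing $j$ prescribed variables to have exponent $\ge q$ costs $jq$ from the degree budget, contributing $\binom{m+r-jq}{r-jq}$; summing with signs $(-1)^j\binom{m}{j}$ yields the stated formula.

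For part (2), I would first prove the orthogonality $\GRM(m(q-1)-r-1,m)\subseteq\GRM(r,m)^{\perp_{\rm E}}$. If $\deg f\le r$ and $\deg g\le m(q-1)-r-1$, then the Euclidean inner product of their evaluations is $\sum_{P\in\F_q^m}(fg)(P)$ with $\deg(fg)\le m(q-1)-1$. Using that $\sum_{x\in\F_q}x^e$ equals $-1$ when $(q-1)\mid e$ and $e>0$, and $0$ otherwise, the evaluation-sum of a reduced monomial $x^{\mathbf e}$ over $\F_q^m$ factors as $\prod_i\sum_{x_i}x_i^{e_i}$ and is nonzero only for the single top monomial $\mathbf e=(q-1,\dots,q-1)$ of degree $m(q-1)$; since reduction modulo the $x_i^q-x_i$ never raises total degree, the reduced form of $fg$ omits this monomial, so the sum vanishes. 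I would then match dimensions: the complementation $e_i\mapsto(q-1)-e_i$ is a bijection on reduced monomials sending degree $d$ to degree $m(q-1)-d$, so the number of reduced monomials of degree $\le m(q-1)-r-1$ equals $q^m$ minus the number of degree $\le r$. Hence the two dimensions sum to $q^m=n$, and orthogonality together with complementary dimensions forces equality.

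The main obstacle is the minimum-distance value $(b+1)q^a$ in (1). For the upper bound I would exhibit an explicit low-weight codeword: writing $r=(m-a-1)(q-1)+(q-1-b)$, the polynomial $\prod_{i=1}^{m-a-1}(x_i^{q-1}-1)\cdot\prod_{j=1}^{q-1-b}(x_{m-a}-\gamma_j)$ with distinct $\gamma_j\in\F_q$ has degree exactly $r$; since each factor $x_i^{q-1}-1$ is (up to sign) the indicator of $\{x_i=0\}$, its evaluation is supported on exactly $(b+1)q^a$ points, with the case $r=0$ (the repetition code) handled separately. The hard direction is the matching lower bound: every nonzero $f$ with $\deg f\le r$ must have at least $(b+1)q^a$ nonzeros on $\F_q^m$. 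I would attempt this by induction on $m$, writing $f=\sum_{t}g_t(x_1,\dots,x_{m-1})\,x_m^t$ with top index $t_0$, applying the one-variable bound (at least $q-t_0$ nonzeros in the $x_m$-direction wherever $g_{t_0}\ne 0$) and the inductive bound to $g_{t_0}$, whose degree is at most $r-t_0$, on $\F_q^{m-1}$. Reconciling these two contributions with the precise arithmetic of the split $m(q-1)-r=a(q-1)+b$ is the delicate bookkeeping, and this is exactly the content of the classical Kasami--Lin--Peterson and Delsarte--Goethals--MacWilliams determination of GRM minimum distances, which I would invoke directly should a self-contained induction prove too lengthy.
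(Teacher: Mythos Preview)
The paper does not prove this lemma: it is quoted from Huffman--Pless \cite{Huffman} as a black box, with no argument supplied. So there is no ``paper's own proof'' to compare against; your proposal goes well beyond what the paper does.

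That said, your proposal is correct and is precisely the standard textbook development. The arguments for (3), for the length and dimension in (1), and for (2) via the character-sum orthogonality plus the exponent-complementation dimension match are the usual ones and are complete as written. Your explicit codeword for the upper bound on the minimum distance is the classical one, and you have correctly flagged the $r=0$ edge case (where $a=m$ and the product is empty). The lower bound is, as you say, the nontrivial part; your inductive sketch on $m$ via slicing in the last variable is exactly the Kasami--Lin--Peterson approach, and it is appropriate to cite that or Delsarte--Goethals--MacWilliams rather than reproduce the bookkeeping. Nothing in the paper requires you to do more than cite \cite{Huffman}.
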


Note that Theorem \ref{th.Plotkin sso} generally gives $q$-ary symplectic SO (resp. DC) codes of length up to $2q+2$
(some can take up to $2q+4$ when $q=2^m\geq 4$).
Next, we apply the Plotkin sum construction to GRM codes. This permits us to construct $q$-ary symplectic SO (resp. DC)
codes of length exceeding $2q+2$.

\begin{theorem}\label{th.SSO from GRM}
    Let $r,i$ and $m$ be three positive integers satisfying $0\leq r<m(q-1)$ and $0\leq i\leq m(q-1)-r-1$.
    Write $m(q-1)-r=a_r(q-1)+b_r$, $m(q-1)-i=a_i(q-1)+b_i$, $r+1=a_r'(q-1)+b_r'$ and $i+1=a_i'(q-1)+b_i'$,
    where $a_r, a_i, a_r', a_i'\geq 0$ and $0\leq b_r,b_i,b_r',b_i'< q-1$.
    Then the following statements hold.
    \begin{enumerate}
        \item [\rm (1)] There exists a symplectic SO code with parameters
        \begin{small}
            $$\left[ 2q^m, \sum_{j=0}^m(-1)^j\binom{m}{j}\left[\binom{m+i-jq}{i-jq}+\binom{m+r-jq}{r-jq}\right], \min\left\{(b_i+1)q^{a_i}, (b_r+1)q^{a_r}\right\} \right]_q^{\rm s}.$$
        \end{small}

        \item [\rm (2)] There exists a symplectic DC code with parameters
        \begin{footnotesize}
            $$\left[ 2q^m, 2q^m-\sum_{j=0}^m(-1)^j\binom{m}{j}\left[\binom{m+i-jq}{i-jq}+\binom{m+r-jq}{r-jq}\right], \min\left\{(b_i'+1)q^{a_i'}, (b_r'+1)q^{a_r'}\right\} \right]_q^{\rm s}.$$
        \end{footnotesize}
    \end{enumerate}
\end{theorem}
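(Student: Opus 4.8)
The plan is to realize both codes as Plotkin sums of nested GRM codes and then invoke Theorem \ref{th.Plotkin sso} together with the structural facts collected in Lemma \ref{lem.GRM properties}. Concretely, I would set $\C_1=\GRM(i,m)$ and $\C_2=\GRM(r,m)$, both of length $q^m$, and form $\PP(\C_1,\C_2)$.

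First I would verify the self-orthogonality hypothesis of Theorem \ref{th.Plotkin sso}, namely $\C_1\subseteq \C_2^{\perp_{\rm E}}$. By Lemma \ref{lem.GRM properties} (2), $\C_2^{\perp_{\rm E}}=\GRM(r,m)^{\perp_{\rm E}}=\GRM(m(q-1)-r-1,m)$, so the containment becomes $\GRM(i,m)\subseteq \GRM(m(q-1)-r-1,m)$. By Lemma \ref{lem.GRM properties} (3), this holds precisely when $i\leq m(q-1)-r-1$, which is exactly the standing hypothesis $0\leq i\leq m(q-1)-r-1$; one also checks $0\leq m(q-1)-r-1<m(q-1)$ so that the dual GRM code is well defined, which follows from $0\leq r<m(q-1)$.

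With the hypothesis confirmed, Theorem \ref{th.Plotkin sso} gives that $\PP(\C_1,\C_2)$ is symplectic SO of length $2q^m$, dimension $\dim \C_1+\dim \C_2$, and minimum symplectic distance $\min\{d_{\rm H}(\C_1),d_{\rm H}(\C_2)\}$, while $\PP(\C_1,\C_2)^{\perp_{\rm s}}$ is the associated symplectic DC code of dimension $2q^m-\dim \C_1-\dim \C_2$ and minimum symplectic distance $\min\{d_{\rm H}(\C_1^{\perp_{\rm E}}),d_{\rm H}(\C_2^{\perp_{\rm E}})\}$. For statement (1) I would read the dimension $\dim\C_1+\dim\C_2$ and the distances $d_{\rm H}(\GRM(i,m))=(b_i+1)q^{a_i}$ and $d_{\rm H}(\GRM(r,m))=(b_r+1)q^{a_r}$ directly off Lemma \ref{lem.GRM properties} (1), using the given decompositions $m(q-1)-i=a_i(q-1)+b_i$ and $m(q-1)-r=a_r(q-1)+b_r$; summing the two dimension formulas yields the bracketed binomial expression.

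The only genuinely computational point is the distance in statement (2), where the $d_{\rm H}$ of the Euclidean duals must be evaluated. Using Lemma \ref{lem.GRM properties} (2) again, $\C_1^{\perp_{\rm E}}=\GRM(m(q-1)-i-1,m)$ and $\C_2^{\perp_{\rm E}}=\GRM(m(q-1)-r-1,m)$; applying the distance formula of Lemma \ref{lem.GRM properties} (1) to these two codes requires the complementary decompositions $m(q-1)-(m(q-1)-i-1)=i+1=a_i'(q-1)+b_i'$ and $m(q-1)-(m(q-1)-r-1)=r+1=a_r'(q-1)+b_r'$, giving distances $(b_i'+1)q^{a_i'}$ and $(b_r'+1)q^{a_r'}$. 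Taking the minimum reproduces the stated distance, and the dimension $2q^m-\dim\C_1-\dim\C_2$ matches after substituting the same binomial sums. I do not anticipate a real obstacle: the argument is a direct assembly of Theorem \ref{th.Plotkin sso} with Lemma \ref{lem.GRM properties}, and the main care needed is bookkeeping—matching the hypothesis $i\leq m(q-1)-r-1$ to the GRM nesting condition and correctly re-parametrizing the order of each dual GRM code so that its minimum distance is expressed through the primed quantities.
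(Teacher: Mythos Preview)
Your proposal is correct and follows essentially the same route as the paper: you choose $\C_1=\GRM(i,m)$ and $\C_2=\GRM(r,m)$, verify $\C_1\subseteq \C_2^{\perp_{\rm E}}$ via Lemma \ref{lem.GRM properties} (2)--(3), apply Theorem \ref{th.Plotkin sso}, and then read off all parameters (including the dual distances through the primed decompositions) from Lemma \ref{lem.GRM properties} (1). This is exactly the argument in the paper, with the same bookkeeping.
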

\begin{proof}
    (1) Let notations be the same as before. Let $\C_2=\GRM(r,m)$.
    Since $r$ is a positive integer satisfying $0\leq r<m(q-1)$,
    then it follows from Lemma \ref{lem.GRM properties} (1) and the representation of $r$ that $\C_2$ has parameters
    $[q^m, \sum_{j=0}^m(-1)^j\binom{m}{j}\binom{m+r-jq}{r-jq},(b_r+1)q^{a_r}]_q^{\rm H}$.
    From Lemma \ref{lem.GRM properties} (2), we have $\C_2^{\perp_{\rm E}}=\GRM(m(q-1)-r-1,m)$.

    Note that $i$ is a positive integer satisfying $0\leq i\leq m(q-1)-r-1$, then $0\leq i< m(q-1)$.
    On one hand, from Lemma \ref{lem.GRM properties} (1) and the representation of $i$, $\GRM(i,m)$ has parameters
    $$\left[q^m, \sum_{j=0}^m(-1)^j\binom{m}{j}\binom{m+i-jq}{i-jq},(b_i+1)q^{a_i}\right]_q^{\rm H}.$$
    On the other hand, from Lemma \ref{lem.GRM properties} (3), we have $\GRM(i,m)\subseteq \GRM(m(q-1)-r-1,m)=\C_2^{\perp_{\rm E}}$.
    Let $\C_1=\GRM(i,m)$. Then $\C_1\subseteq \C_2^{\perp_{\rm E}}$. Therefore, from Theorem \ref{th.Plotkin sso} (1), $\PP(\C_1,\C_2)$ is a symplectic SO code with parameters
    \begin{small}
        $$\left[ 2q^m, \sum_{j=0}^m(-1)^j\binom{m}{j}\left[\binom{m+i-jq}{i-jq}+\binom{m+r-jq}{r-jq}\right], \min\left\{(b_i+1)q^{a_i}, (b_r+1)q^{a_r}\right\} \right]_q^{\rm s}$$
    \end{small}
This completes the proof of (1).

    (2) Let notations be the same as (1) above.
    From Lemma \ref{lem.GRM properties} (2), we have $\C_1^{\perp_{\rm E}}=\GRM(m(q-1)-i-1,m)$.
    Since $0\leq r<m(q-1)$ and $0\leq i\leq m(q-1)-r-1$, we have $0\leq m(q-1)-r-1<m(q-1)$ and $0\leq r\leq  m(q-1)-i-1<m(q-1)$.
    It follows from Lemma \ref{lem.GRM properties} (1) and the representations of $r+1$ and $i+1$
    that the respective minimum Hamming distances of $\C_1^{\perp_{\rm E}}$ and $\C_2^{\perp_{\rm E}}$ are $(b_i'+1)q^{a_i'}$ and $(b_r'+1)q^{a_r'}$.
    Therefore, from Theorem \ref{th.Plotkin sso} (2), $\PP(\C_1,\C_2)^{\perp_{\rm s}}$ is a symplectic DC code with parameters
    \begin{small}
        $$\left[ 2q^m, 2q^m-\sum_{j=0}^m(-1)^j\binom{m}{j}\left[\binom{m+i-jq}{i-jq}+\binom{m+r-jq}{r-jq}\right], \min\left\{(b_i'+1)q^{a_i'}, (b_r'+1)q^{a_r'}\right\} \right]_q^{\rm s}$$
    \end{small}
This completes the proof of (2).
\end{proof}

\begin{example}
 We list some symplectic SO and DC codes in Table 1 obtained by Theorem \ref{th.SSO from GRM}.
 Note that the differences between the symplectic Singleton bound and their minimum symplectic distances are at most $2$ and some of them are $0$. Furthermore, it is well-known that the Singleton bound is a rough bound when the length is relatively large compared to $q$. Hence, these codes listed indeed have good parameters.
\end{example}

  \begin{center}
  \begin{threeparttable}
    \begin{tabular}{c|c|c|c|c|c||c|c|c|c|c|c}
    \multicolumn{12}{c}{Table 1: Some good symplectic SO and DC codes from Theorem \ref{th.SSO from GRM} }\\
     \hline
      $q$ & $m$ & $r$ & $i$ & Symplectic SO & $\text {Bound}$  & $q$ & $m$ & $r$ & $i$ & Symplectic DC & Bound  \\ \hline\hline
         2 & 2 & 0 & 1 & $[8,4,2]_2^{\rm s}$ & 3 &3 & 2 & 1 & 1 & $[18,12,3]_3^{\rm s}$ & 4  \\
         2 & 3 & 1 & 1 & $[16,8,4]_2^{\rm s}$ & 5& 3 & 3 & 1 & 1 & $[54,46,3]_3^{\rm s}$ & 5 \\

         3 & 1 & 1 & 0 & $[6,3,2]_3^{\rm s}$ & 2 &4 & 2 & 1 & 1 & $[32,26,3]_4^{\rm s}$ & 4\\

         3 & 2 & 1 & 1 &  $[18,6,6]_3^{\rm s}$ & 7 &4 & 3 & 1 & 1 & $[128,120,3]_4^{\rm s}$ & 5\\
         3 & 2 & 2 & 1 &  $[18,9,3]_3^{\rm s}$ & 5 &5 & 1 & 1 & 1 & $[10,6,3]_5^{\rm s}$ & 3\\

         4 & 1 & 1 & 1 & $[8,4,3]_4^{\rm s}$ &  3 &5 & 2 & 1 & 1 & $[50,44,3]_5^{\rm s}$ & 4\\
         4 & 2 & 1 & 1 & $[32,6,12]_4^{\rm s}$  & 14 &5 & 3 & 1 & 1 & $[250,242,3]_5^{\rm s}$ & 5\\

         5 & 1 & 1 & 1 & $[10,4,4]_5^{\rm s}$ & 4 &7 & 1 & 1 & 1 & $[14,10,3]_7^{\rm s}$ & 3\\
         5 & 1 & 2 & 1 & $[10,5,3]_5^{\rm s}$ & 3 &7 & 1 & 2 & 2 & $[14,8,4]_7^{\rm s}$  & 4 \\

         7 & 1 & 1 & 1 & $[14,4,6]_7^{\rm s}$ & 6 &7 & 2 & 1 & 1 & $[98,92,3]_7^{\rm s}$ & 4\\

         7 & 1 & 2 & 2 & $[14,6,5]_7^{\rm s}$ & 5 &7 & 3 & 1 & 1 & $[686,678,3]_7^{\rm s}$ & 5\\
         \hline
   \end{tabular}
   \begin{tablenotes}
    \footnotesize
    \item The ``Bound'' in the sixth and twelfth columns denote the symplectic Singleton bound.
    \end{tablenotes}
    \end{threeparttable}
\end{center}

\section{Symplectic LCD codes}\label{sec4}

%\subsection{The characterization and the asymptotic result of symplectic LCD codes} 
In this section, we construct symplectic LCD codes from the Plotkin sum construction. 
The following criterion is important for us. 

\begin{theorem}\label{th.Plotkin SLCD}
	Let $\C_i$ be an $[n,k_i,d_{i}]_q^{\rm H}$ linear code for $i=1, 2$.
    Then the following three statements are equivalent.
    \begin{enumerate}
        \item [\rm (1)] $\PP(\C_1,\C_2)$ is a symplectic LCD $[2n,k_1+k_2,\min\{d_1,d_2\}]_q^{\rm s}$ code.
        \item [\rm (2)] $\PP(\C_1,\C_2)^{\perp_{\rm s}}$ is a symplectic LCD 
        $[2n,2n-k_1-k_2,\min\{d_{\rm H}(\C_1^{\perp_{\rm E}}),d_{\rm H}(\C_2^{\perp_{\rm E}})\}]_q^{\rm s}$ code.
        \item [\rm (3)] $\C_1\cap \C_2^{\perp_{\rm E}}=\{\mathbf{0}\}$ and $k_1=k_2$.
    \end{enumerate}
\end{theorem}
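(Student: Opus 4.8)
The plan is to mirror the structure of the proof of Theorem \ref{th.Plotkin sso}, since the parameter computations are identical and only the inner-product condition changes from self-orthogonality to the LCD property. First I would observe that the dimension and distance claims for $\PP(\C_1,\C_2)$ and $\PP(\C_1,\C_2)^{\perp_{\rm s}}$ are exactly those already verified in Cases 1 and 2 of the proof of Theorem \ref{th.Plotkin sso}; these depend only on the Plotkin sum structure and not on any orthogonality hypothesis, so they may be quoted verbatim. This reduces the theorem to proving the equivalence of three stripped-down statements: (1') $\PP(\C_1,\C_2)$ is symplectic LCD; (2') $\PP(\C_1,\C_2)^{\perp_{\rm s}}$ is symplectic LCD; and (3') $\C_1\cap\C_2^{\perp_{\rm E}}=\{\mathbf{0}\}$ and $k_1=k_2$.

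For (1') $\Leftrightarrow$ (2'), I would use that $(\PP(\C_1,\C_2)^{\perp_{\rm s}})^{\perp_{\rm s}}=\PP(\C_1,\C_2)$ together with the elementary observation that for any code $\C$ the intersection $\C\cap\C^{\perp_{\rm s}}$ coincides with $\C^{\perp_{\rm s}}\cap(\C^{\perp_{\rm s}})^{\perp_{\rm s}}$; hence $\C$ is symplectic LCD if and only if $\C^{\perp_{\rm s}}$ is, and the equivalence is immediate.

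The heart of the argument is (1') $\Leftrightarrow$ (3'), which I would establish through Lemma \ref{lem.symplectic LCD}. Reusing the computation in Equation (\ref{eq.GOmegaGT}), the generator matrix $G$ of $\PP(\C_1,\C_2)$ satisfies
\[
G\Omega_n G^T=\left(\begin{array}{cc} O & G_1G_2^T \\ -G_2G_1^T & O \end{array}\right),
\]
so by Lemma \ref{lem.symplectic LCD} the code is symplectic LCD exactly when this anti-block-diagonal matrix is nonsingular. Writing $A=G_1G_2^T$ (of size $k_1\times k_2$) and noting $G_2G_1^T=A^T$, a kernel vector $(\bx,\by)^T$ satisfies $A\by=\mathbf{0}$ and $A^T\bx=\mathbf{0}$, whence the nullity of the whole matrix equals $k_1+k_2-2\,\rank(A)$. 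Nonsingularity therefore forces $\rank(A)=\tfrac{1}{2}(k_1+k_2)$; since $\rank(A)\le\min\{k_1,k_2\}$, this is possible only when $k_1=k_2$, and it then demands that $A$ have full rank $k_1$.

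It remains to identify full rank of $A=G_1G_2^T$ with the condition $\C_1\cap\C_2^{\perp_{\rm E}}=\{\mathbf{0}\}$. The key computation, which I expect to be the main technical step, is the rank formula $\rank(G_1G_2^T)=k_1-\dim(\C_1\cap\C_2^{\perp_{\rm E}})$: a vector lies in the kernel of $G_2G_1^T$ exactly when the corresponding codeword of $\C_1$ is Euclidean-orthogonal to every row of $G_2$, hence lies in $\C_2^{\perp_{\rm E}}$, and since $G_1$ has full row rank this identifies $\ker(G_2G_1^T)$ with $\C_1\cap\C_2^{\perp_{\rm E}}$. Combining this with the previous paragraph, the block matrix is nonsingular if and only if $k_1=k_2$ and $\dim(\C_1\cap\C_2^{\perp_{\rm E}})=0$, which is precisely (3'). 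Assembling the three equivalences and restoring the parameters then completes the proof.
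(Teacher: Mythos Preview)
Your proposal is correct and follows essentially the same route as the paper: quote the parameter computations from Theorem~\ref{th.Plotkin sso}, handle (1')\,$\Leftrightarrow$\,(2') via the double-dual identity, and for (1')\,$\Leftrightarrow$\,(3') combine Equation~(\ref{eq.GOmegaGT}) with Lemma~\ref{lem.symplectic LCD} and the rank formula $\rank(G_1G_2^T)=k_1-\dim(\C_1\cap\C_2^{\perp_{\rm E}})$. The only cosmetic difference is that you extract $k_1=k_2$ from the elementary bound $\rank(G_1G_2^T)\le\min\{k_1,k_2\}$, whereas the paper phrases the same inequality via Lemma~\ref{lem.l-intersection111}.
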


\begin{proof}
The parameters of $\PP(\C_1,\C_2)$ and $\PP(\C_1,\C_2)^{\perp_{\rm s}}$ are straightforward from  the proof of Theorem \ref{th.Plotkin sso}.

(1) $\Leftrightarrow$ (2) The result is obvious since $(\PP(\C_1,\C_2)^{\perp_{\rm s}})^{\perp_{\rm s}}=\PP(\C_1,\C_2)$.

    (1) $\Leftrightarrow$ (3) By Equation (\ref{eq.GOmegaGT}), $\PP(\C_1,\C_2)$ is symplectic LCD if and only if ${\rm rank}(G\Omega_n G^T)=2{\rm rank}(G_1G_2^T)=k_1+k_2$.
    Since ${\rm rank}(G_1G_2^T)=k_1-\dim(\C_1\cap \C_2^{\perp_{\rm E}})$, then $\PP(\C_1,\C_2)$ is symplectic LCD 
    is further equivalent to $2\dim(\C_1\cap \C_2^{\perp_{\rm E}})=k_1-k_2$.
    Note that $\C_1$ and $\C_2^{\perp_{\rm E}}$ have respective parameters $[n,k_1]_q$ and $[n,n-k_2]_q$. 
    Then according to Lemma \ref{lem.l-intersection111},
    we have $2\dim(\C_1\cap \C_2^{\perp_{\rm E}})\geq \max\{2k_1-2k_2, 0\}$.
    Hence, $2\dim(\C_1\cap \C_2^{\perp_{\rm E}})=k_1-k_2$ holds if and only if
    $k_1-k_2\geq 0$ and $k_1-k_2\geq 2k_1-2k_2$, $i.e.$, $\dim(\C_1\cap \C_2^{\perp_{\rm E}})=0$ ($i.e.$, $\C_1\cap \C_2^{\perp_{\rm E}}=\{\mathbf{0}\}$) and $k_1=k_2$,
    which implies that the statements (1) and (3) are equivalent.

    In summary, the statements (1), (2) and (3) are equivalent. This completes the proof.
\end{proof}

Similar to Theorem \ref{th.Plotkin sum SO good}, we demonstrate that symplectic LCD codes derived from Theorem \ref{th.Plotkin SLCD} 
also exhibit asymptotic goodness. Therefore, these symplectic LCD codes are also meaningful. 

\begin{theorem}\label{th.Plotkin sum LCD good}
    Let $q$ be a prime power. Then $q$-ary symplectic LCD codes derived from 
    Theorem \ref{th.Plotkin SLCD}  are asymptotically good with respect to the symplectic distance. 
    %In other words, $q$-ary symplectic LCD codes obtained from Theorem \ref{th.Plotkin SLCD} are asymptotically good with respect to the symplectic distance for each prime power $q$.   
\end{theorem}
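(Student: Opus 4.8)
The plan is to mirror the structure of the proof of Theorem \ref{th.Plotkin sum SO good}, replacing the input of asymptotically good Euclidean SO codes with an input family of asymptotically good Euclidean LCD codes. The strategy hinges on producing, for each prime power $q$, an infinite sequence of $[n_i,k_i,d_i]_q^{\rm H}$ linear codes whose rate and relative distance are both bounded away from zero, and which can be fed into Theorem \ref{th.Plotkin SLCD} to yield symplectic LCD codes via the Plotkin sum. The cleanest way to meet condition (3) of Theorem \ref{th.Plotkin SLCD} is to take $\C_1=\C_2=\C_i$ where $\C_i$ is Euclidean LCD; then $k_1=k_2$ automatically, and $\C_1\cap\C_2^{\perp_{\rm E}}=\C_i\cap\C_i^{\perp_{\rm E}}=\{\mathbf{0}\}$ holds precisely because $\C_i$ is Euclidean LCD. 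By Theorem \ref{th.Plotkin SLCD}, $\PP(\C_i,\C_i)$ is then a symplectic LCD $[2n_i,2k_i,d_i]_q^{\rm s}$ code.

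First I would invoke the existence of asymptotically good Euclidean LCD codes over every finite field. Sendrier \cite{LCD-is-good} proved that Euclidean LCD codes meet the Gilbert--Varshamov bound and hence are asymptotically good; Carlet $et\ al.$ \cite{CG2016} showed that for $q>3$ every linear code is equivalent to a Euclidean LCD code, so the asymptotically good families inherited from the classical bound remain LCD up to equivalence (which preserves the parameters $n_i,k_i,d_i$). Thus for each prime power $q$ there is an infinite subfamily $\{\C_i\}_{i=0}^\infty$ of $[n_i,k_i,d_i]_q^{\rm H}$ Euclidean LCD codes with $\lim_{i\to\infty}n_i=\infty$, $\liminf_{i\to\infty}\frac{k_i}{n_i}>0$ and $\liminf_{i\to\infty}\frac{d_i}{n_i}>0$. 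I would cite these two references to cover the remaining small fields, noting that the cases $q=2,3$ are already handled directly by \cite{LCD-is-good} since the Gilbert--Varshamov-type argument there does not require $q>3$.

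Next I would set $\PP_i=\PP(\C_i,\C_i)$ and carry out the asymptotic computation exactly as before. The block length doubles to $2n_i\to\infty$, the dimension doubles to $2k_i$, and the symplectic distance equals $\min\{d_i,d_i\}=d_i$ by the distance formula established in Case 1 of Theorem \ref{th.Plotkin sso}. Consequently the asymptotic rate is
\begin{align}
    r=\liminf_{i\to\infty}\frac{2k_i}{2n_i}=\liminf_{i\to\infty}\frac{k_i}{n_i}>0
\end{align}
and the asymptotic relative symplectic distance is
\begin{align}
    \delta=\liminf_{i\to\infty}\frac{d_i}{2n_i}=\frac{1}{2}\liminf_{i\to\infty}\frac{d_i}{n_i}>0,
\end{align}
so the family $\{\PP_i\}_{i=0}^\infty$ of symplectic LCD codes is asymptotically good with respect to the symplectic distance.

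The only genuine subtlety—and the step I expect to be the main obstacle—is confirming that an asymptotically good family of \emph{Euclidean LCD} codes exists over \emph{every} $q$, including $q=2$ and $q=3$, since the equivalence result of \cite{CG2016} is stated only for $q>3$. I would resolve this by appealing directly to Sendrier's result \cite{LCD-is-good}, which establishes that the class of Euclidean LCD codes attains the asymptotic Gilbert--Varshamov bound over any fixed finite field, so genuine (not merely up-to-equivalence) asymptotically good binary and ternary Euclidean LCD families are available. With that input secured, the remainder of the argument is a verbatim adaptation of the proof of Theorem \ref{th.Plotkin sum SO good}, and the claimed asymptotic goodness follows.
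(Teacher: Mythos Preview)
Your proposal is correct and follows essentially the same approach as the paper: feed an asymptotically good family of Euclidean LCD codes $\{\C_i\}$ into Theorem \ref{th.Plotkin SLCD} with $\C_1=\C_2=\C_i$, then repeat the rate/relative-distance computation from Theorem \ref{th.Plotkin sum SO good}. The only cosmetic difference is that the paper cites \cite{M1992} for the asymptotic goodness of Euclidean LCD codes (whereas you cite Sendrier \cite{LCD-is-good}, which is in fact the more precise reference), and your extra discussion of the $q>3$ equivalence from \cite{CG2016} is unnecessary since Sendrier's result already covers all $q$.
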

\begin{proof}
    By \cite{M1992}, $q$-ary Euclidean LCD codes are asymptotically good with respect to the Hamming distance.  
    Let $\{\C_i\}_{i=0}^{\infty}$ be an infinite subset of $[n_i,k_i,d_i]_q^{\rm H}$ codes from asymptotically good 
    $q$-ary Euclidean LCD codes. 
    Since $\dim(\C_i\cap \C_i^{\perp_{\rm E}})=0$, it follows from Theorem \ref{th.Plotkin SLCD} that 
    $\PP(\C_i,\C_i)$ gives a $[2n_i,2k_i,d_i]_q^{\rm s}$ symplectic LCD code, denoted by $\PP_i$.   
    Then the rest of the proof is similar to that of Theorem \ref{th.Plotkin sum SO good}. 
    Therefore, the excepted result follows. 
\end{proof}

%\begin{remark}\label{rem.SLCD good} 
%    Similar to Remark \ref{rem.sum so good} and the proof of Theorem \ref{th.Plotkin sum LCD good}, we know that 
%    $q$-ary symplectic LCD codes obtained from Theorem \ref{th.Plotkin SLCD} are also asymptotically good in terms of the Hamming 
%    distance for each prime power $q$.   
%\end{remark}

%\subsection{Plotkin sum construction for symplectic LCD codes}

Furthermore, we have the following result. 

\begin{theorem}\label{th.2-ary symplectic LCD codes}
    Let $\C_1$ be an $[n,k,d]_q^{\rm H}$ linear code. If there is a permutation matrix $P$ such that $\C_2=\C_1P$ and $\C_1\cap \C_2^{\perp_{\rm E}}=\{\mathbf{0}\}$,
    then there is a symplectic LCD $[2n,2k,d]_q^{\rm s}$, $i.e.$, an ACD $(n,q^{2k},d)_{q^2}^{\rm H}$ code
    and a symplectic LCD $[2n,2n-2k,d_{\rm H}(\C^{\perp_{\rm E}})]_q^{\rm s}$, 
    $i.e.$, an ACD $(n,q^{2n-2k},d_{\rm H}(\C^{\perp_{\rm E}}))_{q^2}^{\rm H}$ code.
\end{theorem}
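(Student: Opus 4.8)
The plan is to recognize this statement as an essentially immediate corollary of Theorem \ref{th.Plotkin SLCD} combined with the additive-code dictionary set up in Section \ref{sec2.2}. First I would observe that $\C_2=\C_1 P$, being obtained from $\C_1$ by a coordinate permutation, is permutation-equivalent to $\C_1$; hence $\C_2$ is again an $[n,k,d]_q^{\rm H}$ code. In the notation of Theorem \ref{th.Plotkin SLCD} this already fixes $k_1=k_2=k$ and $d_1=d_2=d$, so that two of the three pieces of hypothesis (3) are automatic and only the intersection condition remains to be supplied.

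Next I would check that the remaining hypothesis matches condition (3) of Theorem \ref{th.Plotkin SLCD}: the assumed $\C_1\cap \C_2^{\perp_{\rm E}}=\{\mathbf{0}\}$ is precisely the required intersection condition, and $k_1=k_2$ was noted above. Invoking Theorem \ref{th.Plotkin SLCD}, I conclude that $\PP(\C_1,\C_2)$ is a symplectic LCD $[2n,2k,\min\{d_1,d_2\}]_q^{\rm s}=[2n,2k,d]_q^{\rm s}$ code and that $\PP(\C_1,\C_2)^{\perp_{\rm s}}$ is a symplectic LCD $[2n,2n-2k,\min\{d_{\rm H}(\C_1^{\perp_{\rm E}}),d_{\rm H}(\C_2^{\perp_{\rm E}})\}]_q^{\rm s}$ code. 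To simplify the latter distance I would use that the Euclidean dual commutes with coordinate permutations, namely $\C_2^{\perp_{\rm E}}=(\C_1 P)^{\perp_{\rm E}}=\C_1^{\perp_{\rm E}}P$, whence $d_{\rm H}(\C_2^{\perp_{\rm E}})=d_{\rm H}(\C_1^{\perp_{\rm E}})$ and the minimum collapses to $d_{\rm H}(\C^{\perp_{\rm E}})$ (writing $\C=\C_1$).

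Finally, I would translate both codes into additive language via the isometry $\phi$ of Section \ref{sec2.2}. Since $\phi$ is an isomorphism carrying $(\F_q^{2n}, d_{\rm s}, \langle\cdot,\cdot\rangle_{\rm s})$ to $(\F_{q^2}^n, d_{\rm H}, \langle\cdot,\cdot\rangle_{\rm a})$ and satisfies $\langle \bu,\bv\rangle_{\rm s}=\langle \phi(\bu),\phi(\bv)\rangle_{\rm a}$, a $[2n,k',d']_q^{\rm s}$ code is sent to an $(n,q^{k'},d')_{q^2}^{\rm H}$ additive code, and the symplectic LCD property $\C\cap \C^{\perp_{\rm s}}=\{\mathbf{0}\}$ transfers to the ACD property $\phi(\C)\cap \phi(\C)^{\perp_{\rm a}}=\{\mathbf{0}\}$. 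Applying this to $\PP(\C_1,\C_2)$ and to $\PP(\C_1,\C_2)^{\perp_{\rm s}}$ yields the claimed ACD $(n,q^{2k},d)_{q^2}^{\rm H}$ and $(n,q^{2n-2k},d_{\rm H}(\C^{\perp_{\rm E}}))_{q^2}^{\rm H}$ codes. Since the result is a corollary, I expect no genuine obstacle; the only care required is the routine bookkeeping that a coordinate permutation preserves the minimum distances of both a code and its Euclidean dual (so that all four distances $d_1,d_2,d_{\rm H}(\C_1^{\perp_{\rm E}}),d_{\rm H}(\C_2^{\perp_{\rm E}})$ reduce to $d$ and $d_{\rm H}(\C^{\perp_{\rm E}})$) and that the size and distance parameters remain consistent when passing between the symplectic and additive descriptions under $\phi$.
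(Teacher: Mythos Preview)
Your proposal is correct and follows essentially the same approach as the paper: observe that $\C_2=\C_1P$ has the same parameters as $\C_1$, apply Theorem \ref{th.Plotkin SLCD}, and then transfer to additive codes via $\phi$. You even spell out the small justification that $d_{\rm H}(\C_2^{\perp_{\rm E}})=d_{\rm H}(\C_1^{\perp_{\rm E}})$, which the paper leaves implicit.
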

\begin{proof}
    Since $\C_2=\C_1P$ and $P$ is a permutation matrix, then $\C_2$ is also an $[n,k,d]_q^{\rm H}$ linear code.
    Then it follows from Theorem \ref{th.Plotkin SLCD} that there exists a symplectic LCD $[2n,2k,d]_q^{\rm s}$ code $\C'$ 
    and a symplectic LCD $[2n,2n-2k,d_{\rm H}(\C^{\perp_{\rm E}})]_q^{\rm s}$ code $\C''$. 
    Consider $\phi(\C')$ and $\phi(\C'')$. 
    The desired result follows immediately from \cite[Lemma 14]{quantum-codes-IT-2}. 
\end{proof}

For convenience, we use an array to represent a permutation matrix. 
For example, $P=(3\ 5\ 2\ 1\ 6\ 4)$ denotes a permutation matrix $P$ 
whose $(1,3), (2,5), (3,2), (4,1), (5,6), (6,4)$-entries are $1$ and others are $0$. 
Note also that a $[2n,2k,d]_2^{\rm s}$ binary symplectic LCD code is further equivalent to 
a so-called $(n,2^{2k},d)_4^{\rm H}$ quaternary trace Hermitian (TrH) ACD code \cite{CRSS1998}. 
For more details on TrH ACD codes, one can refer to recent papers \cite{GLLM2023,SLKS2023,SLOS2022}. 
In the following, we give some specific constructions of binary symplectic LCD codes, $i.e.,$ quaternary TrH ACD codes 
that outperform best-known quaternary Hermitian LCD codes reported in the literature.

\begin{example}\label{exam.SLCD2}
  In \cite{L2018}, Li constructed a best-known $[63,56,3]_4^{\rm H}$ quaternary Hermitian LCD code. 
  Take $\C_1$ as the best-known $[63,56,4]_2^{\rm H}$ binary linear code in the current $\texttt{MAGMA BKLC}$ database \cite{Magma,codetable}. 
  Let 
  \begin{align*}
      \begin{split}
       P_{65}=(& 53\ 29\ 63\ 14\ 44\ 47\ 46\ 4\ 51\ 59\ 11\ 20\ 10\ 23\ 13\ 37\ 42\ 9\ 26\ 34\ 12 \\ 
          & 49\ 38\ 30\ 62\ 56\ 16\ 55\ 28\ 33\ 3\ 61\ 40\ 6\ 5\ 35\ 22\ 24\ 52\ 50\ 25\ 7\ \\ 
          & 18\ 39\ 36\ 31\ 8\ 21\ 27\ 57\ 17\ 60\ 41\ 58\ 19\ 43\ 54\ 48\ 1\ 32\ 15\ 45\ 2) 
      \end{split}
  \end{align*}
  and $\C_2=\C_1P_{65}$. 
%$$P=(31\ 5\ 36\ 40\ 38\ 57\ 59\ 16\ 20\ 3\ 26\ 52\ 62\ 50\ 51\ 7\ 30\ 15\ 21\ 39\ 47\ 35\ 12\ 18\ 56\ 37\ 25\ 4\ 17\ 44\ 10\ 43\ 13\ 60\ 14\ 24\ 45\ 61\ 29\ 11\ 34\ 6\ 54\ 9\ 58\ 42\ 55\ 33\ 2\ 49\ 53\ 8\ 32\ 22\ 23\ 1\ 41\ 19\ 28\ 46\ 48\ 27 )$$
  Verified by the Magma software package \cite{Magma}, 
  %we know that Both $\C_1$ and $\C_2$ are not Euclidean, Hermitian or symplectic LCD codes, but 
  $\C_1\cap \C_2^{\perp_{\rm E}}=\{\mathbf{0}\}.$ 
  Hence, Theorem \ref{th.2-ary symplectic LCD codes} yields a $[126,112,4]_2^{\rm s}$ binary symplectic LCD code,  
  which is equivalent to  a $(63,2^{56},4)_4^{\rm H}$ TrH ACD code. 
  Note that the $(63,2^{56},4)_4^{\rm H}$ TrH ACD code outperforms the best-known $[63,56,3]_4^{\rm H}$ Hermitian LCD code.  
  In Table 2, we list more quaternary TrH ACD codes derived from Theorem \ref{th.2-ary symplectic LCD codes} outperform linear counterparts. 
\end{example}

\begin{center}
  \begin{threeparttable}
    \begin{tabular}{c|c|c|c|c}
    \multicolumn{5}{c}{Table 2: Some quaternary TrH ACD codes outperform linear counterparts}\\
     \hline
     $\C_1$ & $P$ & $\begin{array}{c}\text{Binary symplectic} \\ \text{LCD codes} \end{array}$ & $\begin{array}{c}\text{Quaternary TrH}\\ \text{ACD codes}\end{array}$ 
     & $\begin{array}{c}\text{Known quaternary}\\ \text{Hermitian LCD codes} \end{array}$    \\ \hline
     
     $[46,23,11]_2^{\rm H}$ & $P_{46}$ & $[92,46,11]_2^{\rm s}$ & $(46,2^{23},11)_4^{\rm H}$ & $[46,23,7]_4^{\rm H}$   \cite{G2020} \\ 
    
     $[52,26,10]_2^{\rm H}$ & $P_{52}$ & $[104,52,10]_2^{\rm s}$ & $(52,2^{26},10)_4^{\rm H}$ & $[52,26,2]_4^{\rm H}$  \cite{G2020} \\ 

     $[56,28,12]_2^{\rm H}$ & $P_{56}$ & $[112,56,12]_2^{\rm s}$ & $(56,2^{28},12)_4^{\rm H}$ & $[56,28,4]_4^{\rm H}$  \cite{G2020} \\  

     $[58,29,12]_2^{\rm H}$ & $P_{58}$ & $[116,58,12]_2^{\rm s}$ & $(58,2^{29},12)_4^{\rm H}$ & $[58,29,4]_4^{\rm H}$  \cite{G2020} \\ 

     $[62,31,12]_2^{\rm H}$ & $P_{62}$ & $[114,62,12]_2^{\rm s}$ & $(62,2^{31},12)_4^{\rm H}$ & $[62,31,7]_4^{\rm H}$  \cite{CGS2023} \\ 

     $[63,56,4]_2^{\rm H}$ & $P_{63}$ & $[126,112,4]_2^{\rm s}$ & $(63,2^{56},4)_4^{\rm H}$ & $[63,56,3]_4^{\rm H}$  \cite{L2018} \\ 

     $[64,32,12]_2^{\rm H}$ & $P_{64}$ & $[128,64,12]_2^{\rm s}$ & $(64,2^{32},12)_4^{\rm H}$ & $[64,32,8]_4^{\rm H}$  \cite{CGS2023} \\ 

     $[70,35,14]_2^{\rm H}$ & $P_{70}$ & $[140,70,14]_2^{\rm s}$ & $(70,2^{35},14)_4^{\rm H}$ & $[70,35,5]_4^{\rm H}$  \cite{G2020} \\ 

     $[72,36,15]_2^{\rm H}$ & $P_{72}$ & $[144,72,15]_2^{\rm s}$ & $(72,2^{36},15)_4^{\rm H}$ & $[72,36,6]_4^{\rm H}$  \cite{G2020} \\ 
     
     $[74,37,14]_2^{\rm H}$ & $P_{74}$ & $[148,74,14]_2^{\rm s}$ & $(74,2^{37},14)_4^{\rm H}$ & $[74,37,4]_4^{\rm H}$  \cite{G2020} \\ 
     \hline 
\end{tabular}
\begin{tablenotes}
\footnotesize
\item $\cdot$ $\C_1$ is the best-known binary linear code in the current $\texttt{MAGMA BKLC}$ database \cite{Magma,codetable}. 
\item $\cdot$ $P_{46}, P_{52}, P_{56}, P_{58}, P_{62}, P_{63}, P_{64}, P_{70}, P_{72}, P_{74}$ are permutation matrices listed in Appendix. 
\end{tablenotes}
\end{threeparttable}        
\end{center}

\section{Conclusions}\label{sec5}

In this paper, we study the Plotkin sum construction with respect to the symplectic inner product and symplectic distance. 
Two criteria for linear codes derived from the Plotkin sum construction being symplectic SO and LCD codes are proposed.  
Symplectic SO and LCD codes constructed by these two ways are also proved to be asymptotically good. 
Based on these criteria, we further present several explicit constructions of symplectic SO and LCD codes via 
$\ell$-intersection pairs of linear codes, GRM codes and general linear codes. 
As a result, many symplectic SO (resp. DC) and LCD codes with good parameters including symplectic MDS codes are obtained.  
In particular, we also construct some binary symplectic LCD codes, which are equivalent to quaternary TrH ACD codes that 
outperform best-known quaternary Hermitian LCD codes.

Generally, it is difficult to determine the minimum symplectic distance or the symplectic weights distribution of a symplectic SO code or a symplectic LCD code. 
Although there are explicit minimum symplectic distances of symplectic SO and LCD codes obtained in this paper, 
it would be interesting to discuss symplectic weights distributions of these symplectic SO codes and symplectic LCD codes or 
to construct more symplectic SO codes and symplectic LCD codes with explicit minimum symplectic distances. 
In addition, it would also be interesting to apply symplectic SO and LCD codes obtained in this paper to construct QECCs and maximal entanglement EAQECCs.

%\noindent{\bf Acknowledgement:} The authors thank Dr. Chaofeng Guan for helpful discussion. This research is supported by Natural Science Foundation of China (12071001, U21A20428 and 12171134).

\section*{Conflict of Interest} The authors affirm that no conflicts of interest exist.  

\section*{Data Availability} Data used for the research has been described in the article.

\end{sloppypar}

\section*{Appendix} 

Here, we give more details on permutation matrices used in Table 2 as follows:  
\begin{align*}
    P_{46}=(& 23\ 28\ 2\ 20\ 32\ 18\ 22\ 3\ 17\ 6\ 46\ 15\ 36\ 27\ 14\ 43\ 16\ 39\ 26\ 38\ \\ 
            & 12\ 42\ 4\ 10\ 40\ 31\ 37\ 24\ 7\ 29\ 5\ 19\ 44\ 1\ 41\ 13\ 11\ 45\ 21\ 35\ \\ 
            & 30\ 9\ 25\ 33\ 34\ 8), 
\end{align*}
\begin{align*}
    P_{52}=(& 22\ 29\ 50\ 46\ 35\ 37\ 42\ 45\ 18\ 30\ 36\ 2\ 25\ 43\ 10\ 5\ 26\ 11\ 24\ 27\ \\
    & 20\ 49\ 17\ 23\ 41\ 47\ 7\ 6\ 51\ 48\ 12\ 16\ 34\ 31\ 44\ 15\ 3\ 13\ 52\ 19\ \\ 
    & 14\ 33\ 38\ 40\ 9\ 8\ 32\ 39\ 4\ 21\ 28\ 1), 
\end{align*}
\begin{align*}
    P_{56}=(& 42\ 15\ 52\ 24\ 38\ 11\ 12\ 18\ 6\ 39\ 51\ 53\ 2\ 30\ 46\ 21\ 29\ 3\ 13\ 19\ \\ 
           & 49\ 36\ 48\ 9\ 7\ 31\ 41\ 50\ 40\ 8\ 4\ 25\ 1\ 47\ 34\ 5\ 33\ 45\ 10\ 27\ 14\ \\
           & 28\ 16\ 23\ 26\ 37\ 54\ 43\ 35\ 32\ 55\ 56\ 44\ 22\ 20\ 17), 
\end{align*}
\begin{align*}
    P_{58}=(& 11\ 38\ 18\ 51\ 42\ 52\ 19\ 48\ 56\ 55\ 22\ 20\ 7\ 5\ 4\ 21\ 34\ 14\ 40\ 35\ \\ 
            & 27\ 24\ 41\ 44\ 9\ 10\ 23\ 6\ 8\ 31\ 46\ 39\ 57\ 47\ 36\ 45\ 54\ 3\ 50\ 33\ \\ 
            & 49\ 1\ 15\ 58\ 29\ 16\ 26\ 2\ 13\ 25\ 17\ 28\ 12\ 53\ 43\ 37\ 32\ 30), 
\end{align*}
\begin{align*}
     P_{62}=(& 1\ 23\ 42\ 44\ 62\ 3\ 25\ 27\ 36\ 48\ 47\ 29\ 11\ 34\ 20\ 16\ 30\ 18\ 51\ 31\ 4 \\ 
             & 52\ 33\ 7\ 57\ 19\ 49\ 58\ 2\ 43\ 8\ 26\ 5\ 37\ 59\ 40\ 6\ 55\ 61\ 9\ 46\ 15\ 21\ \\ 
             & 22\ 38\ 32\ 13\ 28\ 53\ 12\ 17\ 24\ 56\ 10\ 60\ 39\ 50\ 35\ 14\ 41\ 45\ 54), 
\end{align*}
\begin{align*}
     P_{63}=(& 53\ 29\ 63\ 14\ 44\ 47\ 46\ 4\ 51\ 59\ 11\ 20\ 10\ 23\ 13\ 37\ 42\ 9\ 26\ 34\ 12 \\ 
             & 49\ 38\ 30\ 62\ 56\ 16\ 55\ 28\ 33\ 3\ 61\ 40\ 6\ 5\ 35\ 22\ 24\ 52\ 50\ 25\ 7\ \\ 
             & 18\ 39\ 36\ 31\ 8\ 21\ 27\ 57\ 17\ 60\ 41\ 58\ 19\ 43\ 54\ 48\ 1\ 32\ 15\ 45\ 2),  
\end{align*}
\begin{align*}
     P_{64}=(& 48\ 32\ 16\ 55\ 9\ 43\ 23\ 46\ 49\ 10\ 60\ 62\ 40\ 22\ 3\ 38\ 8\ 34\ 59\ 35\ 36 \ 30 \\ 
             & 1\ 6\ 63\ 57\ 5\ 44\ 15\ 33\ 53\ 4\ 25\ 31\ 29\ 45\ 58\ 17\ 64\ 11\ 14\ 24\ 52\ 42 \\ 
             & 18\ 20\ 51\ 37\ 56\ 61\ 41\ 13\ 19\ 7\ 50\ 2\ 12\ 21\ 26\ 54\ 39\ 27\ 47\ 28),  
\end{align*}
\begin{align*}
    P_{70}=(& 56\ 23\ 40\ 64\ 63\ 61\ 39\ 41\ 55\ 33\ 13\ 22\ 26\ 17\ 48\ 18\ 36\ 20\ 67\ 46 \\ 
            & 59\ 25\ 19\ 45\ 31\ 51\ 38\ 65\ 50\ 57\ 43\ 60\ 9\ 54\ 10\ 52\ 70\ 14\ 30\ 44\  \\ 
            & 62\ 66\ 58\ 5\ 35\ 3\ 42\ 6\ 7\ 21\ 16\ 47\ 53\ 28\ 8\ 49\ 29\ 34\ 4\ 27\ 15\ 37 \\ 
            & 69\ 11\ 32\ 1\ 68\ 12\ 2\ 24),  
\end{align*}
\begin{align*}
    P_{72}=(& 36\ 25\ 56\ 57\ 22\ 35\ 29\ 23\ 8\ 31\ 3\ 38\ 19\ 70\ 45\ 68\ 10\ 72\ 53\ 2\ 42 \\ 
            & 58\ 18\ 51\ 61\ 63\ 27\ 54\ 52\ 46\ 13\ 43\ 11\ 62\ 17\ 28\ 71\ 41\ 66\ 47\ 14\ \\ 
            & 6\ 12\ 48\ 26\ 65\ 67\ 30\ 40\ 24\ 7\ 33\ 32\ 34\ 64\ 55\ 4\ 50\ 49\ 69\ 37\ 60 \\ 
            & 15\ 21\ 16\ 1\ 44\ 20\ 39\ 9\ 59\ 5),  
\end{align*}
\begin{align*}
    P_{74}=(& 49\ 6\ 67\ 11\ 21\ 36\ 71\ 64\ 52\ 7\ 59\ 10\ 42\ 66\ 28\ 27\ 30\ 43\ 3\ 38\ \\ 
            & 20\ 14\ 16\ 40\ 60\ 15\ 5\ 41\ 54\ 73\ 17\ 19\ 26\ 57\ 62\ 32\ 2\ 53\ 61\ 34\ \\
            & 9\ 74\ 8\ 1\ 31\ 55\ 22\ 44\ 29\ 68\ 39\ 63\ 33\ 56\ 50\ 13\ 72\ 58\ 69\ 35\ \\ 
            & 70\ 51\ 24\ 47\ 4\ 12\ 45\ 48\ 46\ 18\ 25\ 37\ 23\ 65).   
\end{align*}

\end{document}